\newtheorem{theorem}{Theorem}[section]
\newtheorem{lemma}[theorem]{Lemma}
\newtheorem{definition}[theorem]{Definition}
\newtheorem{proposition}[theorem]{Proposition}
\DeclareMathOperator{\lowest}{lowest}
\newcommand{\Z}{\mathbbm{Z}}
\newcommand{\Q}{\mathbbm{Q}}
\newcommand{\R}{\mathbbm{R}}
\newcommand\makevec[1]{{\boldsymbol #1}}
\def \xx {\makevec{x}}
\def \aa {\makevec{a}}
\def \mm {\makevec{m}}
\def \gg {\makevec{g}}
\def \nn {\makevec{n}}
\def\:{\colon}
\def\indef#1{\emph{#1}}
\newcommand{\heading}[1]{\vspace{1ex}\par\noindent{\bf\boldmath #1}}
\newcommand\framedpar[1]{\begin{center}\framebox{
\begin{minipage}{0.9\linewidth}
\vspace{1mm}#1 \vspace{1mm}
\end{minipage}~}\end{center}}
\newcommand{\marek}[1]{\ifhmode\newline\fi%\marrow
{\color{Orange}\textsf{*** (MAREK: ) #1\\}}}
\newcommand{\peter}[1]{\ifhmode\newline\fi%\marrow
{\color{blue}\textsf{*** (PETER: ) #1\\}}}
\begin{document}

\title{Solving Equations and Optimization Problems with Uncertainty\thanks{The research 
of Peter Franek received funding from Austrian Science Fund (FWF): M 1980 and
from the Czech Science Foundation (GACR) grant number 15-14484S with institutional support RVO:67985807.
The research of Marek Kr\v{c}\'al was supported by the Seventh Framework Programme (291734).}
}

%\titlerunning{Short form of title}        % if too long for running head

\author{Peter Franek \and  Marek Kr\v c\'al \and Hubert Wagner}

\maketitle

\begin{abstract}
We study the problem of detecting zeros of continuous functions that are known only up to an error bound, extending
the theoretical work of~\cite{nondec} with explicit algorithms and experiments with an implementation.\footnote{\url{https://bitbucket.org/robsatteam/rob-sat}}
%More formally, the \emph{robustness of zero} of a continuous map $f\:X\to \R^n$ is the maximal $r>0$ such that  each $g\:X\to\R^n$ 
%with $\|f-g\|_\infty\le r$ has a zero. We develop and implement an algorithm approximating the robustness of zero. 
%The main source of motivation for such an algorithm comes from the field of interval arithmetic.
Further, we show how to use the algorithm for approximating worst-case optima in optimization problems in which 
the feasible domain is defined by the zero set of a function $f: X\to\R^n$ which is only known approximately.

The algorithm first identifies a~subdomain $A$ where the function $f$ is provably non-zero, 
a~simplicial approximation $f': A\to S^{n-1}$ of $f/|f|$, and then verifies non-extendability of $f'$ to $X$ to certify a~zero.
Deciding extendability is based on computing the cohomological
\emph{obstructions} and their persistence. We describe an explicit algorithm for the \emph{primary and secondary 
obstruction}, two stages of a sequence of algorithms with increasing complexity. Using elements and techniques of persistent homology,
we quantify the persitence of these obstructions and hence of the robustness of zero.

We provide experimental evidence that for random Gaussian fields,
the \emph{primary obstruction}---a much less computationally demanding test than the secondary obstruction---is typically 
sufficient for approximating robustness of zero.
\end{abstract}
\section{Introduction}
\label{s:intro} 
\heading{Motivation.}
Detecting zeros of $\R^n$-valued functions is  equivalent to solving systems of real equations, a fundamental problem
of mathematics and theoretical computer science. Our research is  motivated by practical applications, in which the data is often known only approximately. 
We address the case where the input data is limited to the  \emph{approximate} values of a~continuous function $f$ with
values in $\R^n$, sampled over a finite point set.
This uncertainty  is handled in a deterministic way: we aim at verifying that \emph{each} continuous
function compatible with our partial knowledge of $f$ has a~zero. 

Functions that are known only approximately appear in various contexts and are handled in different ways. 
For example, rounding errors in floating-point computations are systematically treated by methods of \emph{interval}
\begin{wrapfigure}{r}{4cm}
\includegraphics{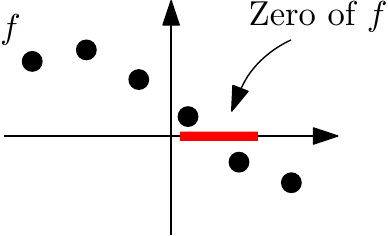}
\caption{For scalar valued function, existence of a~zero can be verified via the intermediate value theorem.}
\label{f:fig1}
\end{wrapfigure}
\emph{arithmetic} and detection of zeros 
resistant to bounded errors is a~frequent problem in this field \cite{Frommer:05,Dian:03,Alefeld:04,Jaulin1,Jaulin4}. 
Other instances of uncertain functions come from measurements of physical quantities, such as in medical 
imaging~\cite{interlevel,gao2013segmenting,Chung2009} 
or robotics~\cite{Merlet:2009,Jaulin3}.
We suppose that potential applications include robust detection of level sets $f^{-1}(a)$ in medical image processing, 
analysing robot trajectory based on data obtained from sensors~\cite{Jaulin3,Jaulin4},
or computing the \emph{inner approximation} of reachable regions of a robotic arm~\cite{Jaulin5}.
The algorithm could also be exploited for analysis of functions obtained by regression (say, in machine learning),
where the function is chosen to fit some given set of sampled values.

To verify that a function $f$, of which we only have a~limited knowledge, has a zero, is equivalent to showing that \emph{each} 
potential candidate $g$ for $f$ has a zero.  If we only have access to sampled values of $f$ and a~Lipschitz constant,
then the set of all such admissible functions $g$ is huge and can not be finitely parametrized. 
However, methods of computational homotopy theory can be applied: 
the closely related problem of verifying that each continuous 
$r$-perturbation of a~given function has a~zero, can be reduced to the topological extension problem for maps into a sphere~\cite{nondec}. 
The latter problem can be addressed via means of obstruction theory, using an algorithmic construction of Postnikov towers.
Such construction has never been implemented and is in its full generality probably out of reach, given the limitations of
computer power. Using a number of simplifications as well as some methods of persistent homology, we present
a~partial solution to the above problem accompanied by an implementation, complexity analysis and several computational experiments.

\heading{Statement of the results.} 
We present an algorithm for detecting zeros of vector valued functions $f:X\to\R^n$ on a compact space $X$ and for approximating the 
\emph{robustness} of zero, that is, a~maximal real number $r>0$ such that every continuous $g: X\to\R^n$ satisfying $\|g-f\|\leq r$
has a zero. By $\|f\|$ we denote the max-norm $\max_{x\in X} |f(x)|$ where $|\cdot|$ is a fixed $\ell_p$ norm in $\R^n$. 
Nontrivial cases happen if $\dim X\geq n$, as otherwise arbitrarily small perturbations of $f$ avoid zero. 
For computer representation we assume that the space $X$ is a simplicial complex.
The map $f\:X\to\R^n$ is specified by its values on the vertices which are assumed to be rational,
and by a rational value $\alpha>0$ such that $|f(x)-f(y)|\le \alpha$ for arbitrary points $x$ and $y$ of any simplex of $X$. 
%Such a function will be referred to as \emph{simplexwise $\alpha$-Lipschitz.} 
We emphasize that the precise knowledge of $f$ is not needed.
The algorithm computes a number $r_1\in\R$ such that
\begin{itemize}
\item Every continuous $g$, $\|g-f\| \leq  r_1$, has a zero.
\end{itemize}
A positive $r_1>0$ is then a~certificate of existence of zero of $f$: 
we will say that $f$ has an \emph{$r_1$-robust zero}.
Otherwise the algorithm outputs a negative number and gives no guarantee of the existence of zero.
Under the dimensional constraints $\dim X\leq n+1$ or $n<3$, it also computes a number $r_2>r_1$ such that
\begin{itemize}
\item Some continuous $g$, $\|g-f\|\leq r_2$, has no zero.
\end{itemize}
Under this dimensional constraint, the gap $r_2-r_1$ provably converges to zero, 
if the constant $\alpha$ (and hence our lack of knowledge of $f$) goes to zero. 

\begin{figure}
\begin{center}
\includegraphics{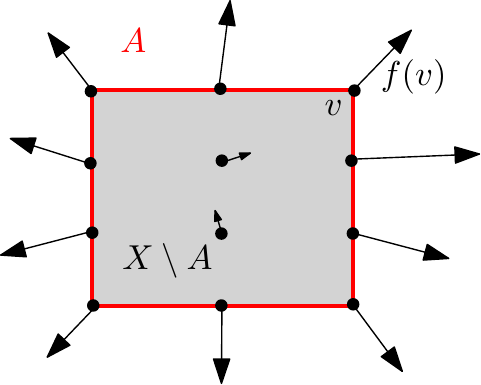}
\end{center}
\caption{This illustrate a~function $f: X \to\R^2$ such that $X=[-1,1]^2$, $A=\partial X$,  and 
$f: A\to \R^2\setminus \{0\}$ is homotopic to the identity map. Then the non-extendability of $f|_A$ to $X\to \R^2\setminus \{0\}$
implies the existence of a~zero in $X\setminus A$. We can bound the robustness of zero in $X\setminus A$ by $\min_{v\in A} |f(v)|$ from below.}
\end{figure}

The main step is to find a subdomain $A\subseteq X$ where $f$ is provable nonzero and where our knowledge of $f$ is sufficient
to determine the homotopy class of $f|_A$ as a~map to $\R^n\setminus \{0\}\simeq S^{n-1}$.
Then non-extendability to $X\to S^{n-1}$ is a certificate of zero. The primary obstruction measures non-extendability to the $n$-skeleton
of $X$ and the secondary obstruction the non-extendability to the $(n+1)$-skeleton. The constraint $\dim X\leq n+1$ could be generalized, 
if we implemented oracles for computing higher obstructions, such as discussed in~\cite{nondec,polypost}.

Our second result is based on computational experiments with random functions. A natural informal question is
\begin{itemize}
\item {\it
How typical are functions for which higher obstructions are needed for detecting a zero?}
\end{itemize}
An example of a~function with nontrivial secondary obstruction is any map $f$ from a $4$-ball $B^4$ to $\R^3$ such that $f|_{\partial B^4}$
is homotopic to the Hopf map $S^3\to S^2$ and hence cannot be extended to $B^4\to \R^3\setminus \{0\}$.
Such property can be verified, if we are given a~sample of function values and a Lipschitz 
constant. Moreover, the homotopy class of $f|_{\partial B^4}$ does not change, if we slightly perturb these sampled function values.

Surprisingly, when performing experiments with random functions (mainly random Gaussian fields), we observed that higher obstructions
are typically \emph{not} needed. 
Whenever we detected a zero of a~randomly generated uncertain function, it was via means of primary obstruction only.
We performed experiments with various random functions from a~triangulated $4$-cube or from a~$4$-torus 
into $\R^3$ as well as from a $5$-cube resp. $5$-torus into $\R^4$.
This observation---if confirmed by theory or by more experiments in different 
settings---could justify the usage of \emph{only the primary obstruction} in potential 
future engineering applications. 

\heading{State of the art.}
Algorithms for detecting zeros used in software packages are based on iterative methods which are often applicable 
if $f$ is given by formulas and is differentiable.
However, these algorithms usually give no guarantees of correctness: the satisfiability of $f(x)=0$ is undecidable for any
class of real functions $f$ that contain polynomials and the sine function~\cite{Wang:74}.

A number of methods has been proposed for testing the (non-)existence of zeros of continuous functions,
exploiting tools ranging from iterative methods in numerical analysis to topology.
The problem has been most studied in the case $\dim X=n$. If  $B^n$ is a unit ball in $\R^n$, then
verifying zeros of $f: B^n\to\R^n$ is equivalent to verifying a fixed point of $f+\mathrm{id}$: here the Brouwer fixed point theorem can be applied~\cite{Rump:2010}.
Other methods for zero verification were studied in the field of interval arithmetic, such as \emph{Miranda's test}~\cite{Alefeld:01},
\emph{Borsuk's test}~\cite{Frommer:05} and the \emph{degree test}~\cite{Franek-al}.
All of these tests have topological flavour and are stable with respect to perturbations of the input function. 
It is shown in~\cite{Franek_Ratschan:2015}  that the \emph{degree test} can detect a zero of $f: B^n\to\R^n$ whenever
the zero of $f$ is \emph{robust} (that is, each $g$ close enough to $f$ has a zero). 
The above mentioned primary obstruction directly reduces to the degree test if $\dim X=n$.
While the topological degree computation has been explicitly described in the literature and also implemented~\cite{degpaper}, 
the problem is far more complicated if the domain $X$ has larger dimension than $n$.
%(then the zero set is the solution of an~underdetermined system of equations). 
In~\cite{nondec} we showed that the existence of a~robust zero of piecewise linear functions $X\to\R^n$ is undecidable 
if $n\geq 3$ is a fixed odd integer and $X$ is a~$(2n-2)$-dimensional simplicial complex 
($X$ is considered to be a~part of the input).

Zero sets of functions with inherent uncertainty have been studied via means of computational topology 
in the context of~\emph{well groups}~\cite{ComputeWell}.
In their general settings, well groups associated to $f: X\to Y$ and a subspace $Y'\subseteq Y$ 
describe properties of the preimage $f^{-1}(Y')$ which persist if we perturb the input function $f$. 
In the important case of $Y=\R^n$ and $Y'=\{0\}$, well groups describe zero sets of functions: namely, the zeroth well
groups measures robustness of \emph{existence} of zero and higher well groups reflect further topological properties of the zero sets.
In~\cite{well-socg} we showed that the primary obstruction can be used to compute a~certain subgroup of the well group
that in many cases coincides with the full well group (see~\cite[Thm. 1.4]{well-socg}).
A general algorithm for well group computation cannot be expected because the above mentioned undecidability result~\cite{nondec}
directly transfers  to well groups when $\dim X\ge 2n-2$ as well. 
Our implementation can be thought of as~an approximation of the zeroth well group that extends the work of~\cite{chazal} 
where the special case $\dim X=n$ is solved. 

An application of our algorithm is in worst-case analysis of optimization problems where the feasible domain is defined by equations.
The worst-case approach in robust optimization has been widely studied, see~\cite{Ben-Tal2002,Ben:2009,bertsimas2011theory,Beyer20073190}. 
Usually, the uncertainty applies to a finite number of parameters which are assumed to be taken from a~known domain. 
In our approach, we rather work with the space of \emph{all continuous functions} that are compatible
with our partial knowledge of $f$. %: the tradeoff is that we don't aim at instances of $f: X\to\R^n$ where $\dim X$ is high.

\heading{Outline and organization of the paper.}
In the algorithm, we first create a~filtration $\{A_r\supseteq A_s\}_{r\leq s}$ of subcomplexes that ``approximate'' the topological spaces 
$A(r):=\{x\in X:\,\,|f(x)|\geq r\}$. 
We compute a~simplicial approximation $f': A_r\to \Sigma$ of $f$ where $\Sigma$ is a given triangulation
of the $(n-1)$-sphere.
Then we ask for the smallest $r$ such that the restriction of $f'$ to $A_r$ can be extended to
all of $X$, and show that the robustness of zero of the original function is $\alpha$-far from $r$.

Such extendability is decidable if $\dim X\leq 2n-3$~\cite{ext-hard}, but the only procedere for this we are aware of
is based on the~algorithm for computing stages of Postnikov towers from~\cite{polypost} 
that depends on several other papers~\cite{pKZ1,post,vokrinek:oddspheres} and is unlikely to be fully implemented in near future. 
Instead of that, we implemented a~persistent version of both the \emph{primary} and \emph{secondary obstruction}, 
which test extendability to the $n$- and $(n+1)$-skeleton of $X$.\footnote{The only exception is the case $n=3$, $\dim X>3$ where
the triviality of secondary obstruction is undecidable in general. However, if $X$ is assumed to be a triangulation of the cube $[0,1]^4$,
then our algorithm works with no essential changes. For many other fixed $4$-dimensional spaces $X$ the problem is decidable too.} 
First, we compute the maximal $r_1$ for which the cohomological obstructions to extending 
${f'}|_{A_{r_1}}$ to $A_{r_1}\cup X^{(n)}$ (\emph{primary obstruction}) does not vanish.
Similarly, we compute a~maximal $r_2\geq r_1$ for which $f'|_{A_{r_2}}$ is not extendable to $A_{r_2}\cup X^{(n+1)}$ 
(non-vanishing of the \emph{secondary obstruction}): this requires us to parametrize all extensions to the $n$-skeleton.

In Section~\ref{s:discretize} we show how to approximate the spaces $A(r)$ by simplicial complexes 
and the sphere-valued map $f/|f|$ via a~simplicial map. 
A~high-level description of our algorithm is in Sections~\ref{sec:algorithm-oracle} and \ref{s:obstructions},
with a partial~lower-level description in Appendix~\ref{s:secondary-persistence} and \ref{s:implementation}.
In Section~\ref{s:rob-opt} we show how to use the method for approximating
worst-case optima in optimization problems where the feasible domain is defined by equations.
In Section~\ref{s:experiment} we present some computational experiments with random Gaussian fields.
More details about testing and performance are delegated to Appendix~\ref{s:formulas}. 
The last section contains theoretical worst-case complexity bounds.

\section{Discretizing the function $f$}
\label{s:discretize}
In this section we show how to convert the ``unknown'' continuous function $f$ to its discrete simplicial approximation.
\begin{definition}
\label{d:filtration}
A \indef{continuous filtration} of spaces is a family $(A_r)_{r\in\R}$ such that $A_r\supseteq A_s$ whenever $r\le s$.
A continuous filtration $(A_r)_{r\in\R}$ is called \indef{step-like} whenever there exists a sequence of numbers $-\infty=:r_{-1}<r_0\le r_1\le r_2\le\ldots\le r_k$ such that for any $r,s\in (r_i,r_{i+1}]$, $A_r=A_s$ holds for all $i$. 
Continuous filtrations $(A_r)_r$ and $(B_r)_r$ are called \indef{$\alpha$-interleaved} whenever $B_{r+\alpha}\subseteq A_r$ and $A_{r+\alpha}\subseteq B_r$ for each $r\in\R$.
\end{definition}

\begin{definition}
\label{d:A-r} 
Let $f\:X\to \R^n$ be a continuous map on a simplicial complex $X$ and let $|\cdot|$ be a norm on $\R^n$. 
\begin{enumerate} 
\item By $A_r$ we denote the \emph{subcomplex} of $X$ spanned by the vertices $v$
of $X$ with $|f(v)|\ge r$. 
\item By $A(r)$ we denote the \emph{subspace} of $X$ defined by $A(r)=\{x\in X\: |f(x)|\ge r\}$. 
\item We say that $f$ is \indef{simplexwise $\alpha$-Lipschitz} whenever $|f(x)-f(y)|\le \alpha$ for each pair of points $x,y\in\Delta$ of any simplex $\Delta\in X$.
\end{enumerate}
\end{definition}

The spaces $A_r$ form a step-like filtration where a step occurs for each $r$ equal to $|f(v)|$ for some vertex $v$ of $X$.

Let $e_j=(0,\ldots,1,0,\ldots,0)$, $j=1,\ldots,n$ be the unit vectors in $\R^n$ in the direction of the axes and let
$\Sigma^{n-1}$ be the simplicial model of the $(n-1)$-sphere obtained from the boundary of a cross-polytope. 
More explicitely, the vertex set of $\Sigma^{n-1}$ is  $\{\pm e_j\,|\,j=1,\ldots, n\}$
and the triangulation consists of all simplices spanned by vertex sets that do not contain any antipodal pair.
A natural sphere-valued approximation of $f$ is then given by the map $f'$ as follows.
\begin{definition}
\label{d:vertex-app}
Let $f: X\to\R^n$ and $V$ be a subset of the vertices of $X$. We define the \indef{vertex approximation} 
$f': V\to \{e_1, -e_1,\ldots, e_n, -e_n\}$ to be the map that to a vertex $v$ assigns $s_j e_j$, where
$j$ is the index of the component of $f(v)$ with \emph{largest} absolute value and $s_j$ is the sign of $f_j(v)$.\footnote{For example,
if $f(v)=[2,-3]$, then we choose $f'(v)=-e_2$. If there are more components of $f(v)$ with the same absolute value, we choose one by an~arbitrarily chosen rule.} 
\end{definition}
\begin{lemma}\label{t:approx}
Let $f\:X\to \R^n$ be a simplexwise $\alpha$-Lipschitz map for some constant $\alpha>0$ and $A_r$, $A(r)$ be the filtrations from Definition~\ref{d:A-r},
defined with respect to the $\ell_p$-norm for some $p\in [1,\infty]$.

Then the following holds: 
\begin{enumerate}
\item The continuous filtrations $(A_r)_{r\in \R}$ and \(A(r)_{r\in\R}\) are $\alpha$-interleaved. 
\item If $r> \alpha n^{1/p}/2,$ the vertex approximation $f': V(A_r)\to V(\Sigma^{n-1})$
defines a simplicial map $f'\:A_r\to\Sigma^{n-1}$ (that is, it maps simplices to simplices). 
\item If $r>\alpha n^{1/p}$, then $f'\:A_r\to \Sigma^{n-1}\subseteq \R^n\setminus\{0\}$ 
is homotopic to $f|_{A_r}\:A_r\to \R^n\setminus\{0\}$.
\end{enumerate}
\end{lemma}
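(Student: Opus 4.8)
The plan is to prove the three parts essentially independently, since each concerns a different aspect of the approximation, and the norm-dependent constants come from comparing the $\ell_p$-geometry on $\R^n$ with the coordinatewise combinatorics of the cross-polytope $\Sigma^{n-1}$.

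For part (1), I would unwind the definition of $\alpha$-interleaving. I must show $A(r+\alpha)\subseteq A_r$ and $A_{r+\alpha}\subseteq A(r)$. For the first inclusion, take a point $x\in A(r+\alpha)$, so $|f(x)|\ge r+\alpha$; $x$ lies in the relative interior of some simplex $\Delta$, and for any vertex $v$ of $\Delta$ the simplexwise $\alpha$-Lipschitz condition gives $|f(v)|\ge |f(x)|-\alpha\ge r$, hence every vertex of $\Delta$ survives into $A_r$, so $\Delta\subseteq A_r$ and in particular $x\in A_r$. For the reverse, if $x\in A_{r+\alpha}$ then $x$ lies in a simplex all of whose vertices $v$ satisfy $|f(v)|\ge r+\alpha$; for an arbitrary point $y$ of that simplex, $|f(y)|\ge |f(v)|-\alpha\ge r$ (using a vertex $v$, or more carefully estimating along the simplex), so $x\in A(r)$. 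This is routine; the only mild care needed is that ``$|f(x)-f(y)|\le\alpha$ for points of a common simplex'' is applied between a vertex and an interior point.

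For parts (2) and (3) the key quantitative lemma is: if $v$ is a vertex with $|f(v)|\ge r$ (in $\ell_p$), and $f'(v)=s_j e_j$ where $j$ is the largest-magnitude coordinate, then $|f_j(v)|\ge r/n^{1/p}$, simply because the largest coordinate is at least the $\ell_p$-average, i.e. $|f_j(v)|^p\ge |f(v)|_p^p/n \ge r^p/n$. Now for part (2): given a simplex $\Delta$ of $A_r$ with vertices $v,w$, I want $f'(v)$ and $f'(w)$ to span a simplex of $\Sigma^{n-1}$, equivalently $f'(v)\ne -f'(w)$. Suppose $f'(v)=e_j$ and $f'(w)=-e_j$ for contradiction. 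Then $f_j(v)\ge r/n^{1/p} > \alpha/2$ and $f_j(w)\le -r/n^{1/p} < -\alpha/2$, so $|f_j(v)-f_j(w)| > \alpha$, contradicting $|f(v)-f(w)|\le\alpha$ (and hence $|f_j(v)-f_j(w)|\le\alpha$, since a single coordinate difference is bounded by the $\ell_p$-norm of the difference). This handles pairs; for a general simplex one notes that pairwise non-antipodality is exactly the condition defining the faces of the cross-polytope, so the whole vertex set of $f'(\Delta)$ spans a simplex. The main obstacle — really the only place to be slightly careful — is making sure the per-coordinate bound $|f_j(v)-f_j(w)|\le |f(v)-f(w)|_p$ holds for every $\ell_p$, which is immediate since projecting onto a coordinate does not increase any $\ell_p$-norm.

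For part (3), with $r>\alpha n^{1/p}$ I want a homotopy between $f|_{A_r}$ and $f'$ as maps into $\R^n\setminus\{0\}$; the natural candidate is the straight-line homotopy $H(x,t)=(1-t)f(x)+t\,\bar f'(x)$ where $\bar f'$ is the linear (PL) extension of the vertex approximation over each simplex. I must check $H(x,t)\ne 0$ throughout. Fix $x$ in a simplex $\Delta$; since $f$ is simplexwise $\alpha$-Lipschitz and $|f|\ge r$ on the vertices of $\Delta$ (as $\Delta\subseteq A_r$, using part (1)-type reasoning or directly the definition of $A_r$), we get $|f(x)|\ge r-\alpha$ on all of $\Delta$. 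Pick a coordinate $j$ that is largest in absolute value for $f$ at some vertex; the quantitative lemma plus the Lipschitz bound show that the sign of $f_j$ is constant on $\Delta$ and $|f_j(x)|\ge r/n^{1/p}-\alpha>0$ for all $x\in\Delta$ — wait, one must instead argue coordinate by coordinate that wherever $\bar f'$ has a nonzero $j$-component, $f_j$ has the same sign and is bounded away from $0$, so the convex combination's $j$-component is nonzero. More precisely: $\bar f'(x)$ is a convex combination of vectors $s_j e_j$; if the barycentric coordinates put positive weight on a vertex with $f'$-value $e_j$, then (by non-antipodality from part (2)) no vertex of $\Delta$ has value $-e_j$, so $f_j\ge -\alpha/2$ — that's not yet strictly positive. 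The cleaner route: let $j$ be the index with the largest weight in $\bar f'(x)$ (weight $\ge 1/n$); then $\bar f'_j(x)\ge 1/n$ in absolute value, say positive, and every vertex with value $e_j$ has $f_j\ge r/n^{1/p}$ while (non-antipodality) no vertex has value $-e_j$, and for the remaining vertices $f_j\ge -\alpha$ crudely — so one still needs the weights to make $H_j(x,t)$ nonzero, which forces balancing $(1-t)$ times a possibly-slightly-negative $f_j$ against $t/n$. The truly robust argument is to show the sign of $f_j(x)$ on $\Delta$ is constant whenever $j$ ever appears in $f'$ on $\Delta$ and $r>\alpha n^{1/p}$: if vertex $v$ has $f'(v)=e_j$ then $f_j(v)\ge r/n^{1/p}>\alpha$, so for any other point $y\in\Delta$, $f_j(y)\ge f_j(v)-\alpha>0$. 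Hence every coordinate appearing in $\bar f'(x)$ has $f_j$ strictly positive (resp.\ negative) on all of $\Delta$ with matching sign, so $H_j(x,t)=(1-t)f_j(x)+t\bar f'_j(x)$ is a sum of two terms of the same sign, not both zero; taking $j$ to be a coordinate with $\bar f'_j(x)\ne0$ gives $H(x,t)\ne0$. This last sign-constancy observation is the crux and is exactly where the stronger threshold $r>\alpha n^{1/p}$ (rather than $\alpha n^{1/p}/2$) is used; everything else is bookkeeping.
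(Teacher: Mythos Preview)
Your proposal is correct and follows essentially the same approach as the paper: both parts~1 and~2 match the paper's arguments almost verbatim, and for part~3 your eventual ``truly robust argument'' (sign-constancy of $f_j$ on a simplex whenever some vertex has $f'(v)=\pm e_j$, using $r/n^{1/p}>\alpha$) is exactly the paper's one-line observation, just reached after several exploratory detours. In a clean write-up you can drop the false starts in part~3 and go directly to: pick any vertex $v$ of the carrier simplex of $x$, say $f'(v)=e_j$; then $f_j>0$ on the whole simplex and $\bar f'_j\ge 0$ there with $\bar f'_j(x)>0$, so the $j$-th coordinate of the straight-line homotopy stays positive.
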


% Motivated by one of SoCG reviewers.. :
The simplicial map $f'\:A_r\to\Sigma^{n-1}$ as above will be called the \indef{simplicial approximation of $f|_{A_r}$.}
We note that the choice of the sphere model and the discretization $f'$ of $f$ is independent of the rest of the algorithm given in the following chapters
and is not the only possible choice: however, we couldn't find one with approximative properties better than in Lemma~\ref{t:approx}.
\begin{proof}[Proof of part 1]
If a~point $x\in |X|$ is not in $A_r$, we know that  $|f(v)|<r$ for some vertex $v$ of a~simplex $\Delta$ supporting $x$. 
The simplexwise $\alpha$-Lipschitz property then implies that $|f(x)|<r+\alpha$, hence $x\notin A(r+\alpha)$. This proves
$A(r+\alpha)\subseteq A_r$.

The other inclusion holds because once a point $x$ of a simplex $\Delta\in X$ is in a~given $A_r$, 
then for arbitrary vertex $v$ of $\Delta$ holds $|f(v)|\ge r$ and  $|f(x)-f(v)|\le \alpha$. 
Thus $|f(x)|\ge r-\alpha$, hence $x\in A(r-\alpha)$. 
\end{proof}
\begin{proof}[Proof of part 2]
We want to prove that no adjacent vertices $u$ and $v$ are mapped by $f'$ to $e_i,-e_i$ for some $i$. 
Without loss of generality we can assume that $f'(v)=e_1$. Assuming $|f(v)|\geq r$ and the definition of $f'$, 
the first component (with largest absolute value) must satisfy $f_1(v)\ge rn^{-1/p}$.
Similarly if $f'(u)=-e_1,$ then $f(u)_1\le-rn^{-1/p}$, but this would imply that $\alpha\geq |f(u)-f(v)|\ge2rn^{-1/p}$, 
contradicting the assumption $r>\alpha n^{1/p}/2$.
\end{proof}
\begin{proof}[Proof of part 3]
We show that the simplexwise straight-line homotopy between $f$ and $f'$ has values in $\R^n\setminus \{0\}$. 
Let $\Delta\in X$, $v$ be a vertex of $\Delta$, $x\in\Delta$, and assume that $r>\alpha n^{1/p}$. Again, assume WLOG that $f'(v)=e_1$.
Then $f_1(x)\geq f_1(v) - \alpha \geq r n^{-1/p}-\alpha >0$ and the straightline homotopy between maps 
$f,f'\:A_r\to\R^n\setminus\{0\}$ has positive first coordinate, hence it avoids zero. 

\end{proof}

\section{The algorithm using an oracle for persistence of obstructions}
\label{sec:algorithm-oracle}
In this section we describe a high-level description of our algorithm for approximating robustness of zero. The specification is as follows:\\

{\bf Input:}
\begin{itemize}
\item $X$, a~simplicial complex,
\item $f: X^{(0)}\to\R^n$, function values at vertices,
\item $\alpha>0$.\footnote{For computability purposes, we assume that $f(v)$ and $\alpha$ are rational or computable.}
\end{itemize}

{\bf Output:}
\begin{itemize}
\item a~lower bound on the robustness of zero (possibly negative),
\item an~upper bound on the robustness of zero (possibly $\infty$).
\end{itemize}

The unknown function $f$ is thus represented by function values in vertices and a simplex-wise Lipschitz constant $\alpha$. 
A negative lower bound or infinite upper bound give no information at all: however, in case of $\dim X\leq n+1$ or $n<3$, the 
lower and upper-bounds on robustness will be at most $2\alpha$-far from each other.
Even outside this dimension range, the computed lower bound on robustness will be at most $\alpha$-far from the robustness of zero 
of $f|_{A\cup X^{(n+1)}}$.

\begin{definition}
\label{d:persistence-obstruction}
Let $X\supseteq A_{r_0}\supseteq A_{r_1}\supseteq \ldots$ be a filtration of simplicial complexes, 
$r_i\leq r_j$ for $i\leq j$ and $f': A_{r_0}\to S^{n-1}$. 
Then the \indef{persistence of primary obstruction} is the largest $r_j$ such that the restriction of $f'$ to $A_{r_j}$ 
\emph{can not} be extended to a~(not necessarily simplicial) map $A_{r_j}\cup X^{(n)}\to S^{n-1}$ where $X^{(n)}$ is the $n$-skeleton of $X$. 
The \indef{persistence of secondary obstruction} is the largest $r_k$
such that the restriction of $f'$ to $A_{r_k}$ can not be extended to $A_{r_k}\cup X^{(n+1)}\to S^{n-1}$.
\end{definition}
In what follows, assume that an oracle is given that, for a filtration of simplicial complexes and a simplicial map $f': A_{r_0}\to\Sigma^{n-1}$, 
computes the persistence of secondary obstrucion. 
We assume that we are given a continuous map $f\:X\to\R^n$ by its values on the vertices of $X$,
its simplexwise Lipschitz constant $\alpha$ and a norm $\ell_p$ on $\R^n$ for $p\in [1,\infty]$.
%\footnote{If we run the algorithm 
%with the parameter $p=1$, it the outputs are correct for arbitrary norm satisfying $|e_i|=1$ for all $i$. 
%However, for larger values of $p$ it gives better approximation guarantees.} 
The outline of the algorithm follows.

{{\renewcommand{\theenumi}{\Alph{enumi}} 
\begin{enumerate}
%\item
%Compute the edgewise Lipschitz constant $\alpha = \max_{(v,u)\text{ %edge of } X} |f(u)-f(v)|$ of $f$.
\item \begin{enumerate}
\item
Label the set 
$$\{|f(v)|\:v\in V(X)\text{ such that } |f(v)|\geq \alpha n^{1/p} \}$$   by $\{r_0,r_1,\ldots,r_h\}$ 
so that $r_i\leq r_j$ for $i\leq j$.
\item 
For any simplex $\Delta\in X$ compute its filtration value $r(\Delta)$ by 
\begin{equation}
\label{e:filtration}
r(\Delta):=\min_{v\text{ vertex of }\Delta} |f(v)|.
\end{equation}
This yields a filtration $A_{r_0}\supseteq\ldots\supseteq A_{r_h}$ that together with the values $r_0,\ldots,r_h$ 
determines the step-like continuous filtration $(A_r)_{r\in\R}$ from Definition~\ref{d:filtration}.
\item 
For vertices $v$ of $X$ with $|f(v)|\ge r_0$ compute the vertex approximation $f'(v)$ via Definition~\ref{d:vertex-app}. 
\end{enumerate}
\item Use the oracle to compute the persistence of secondary obstruction $r_k$ (Def.~\ref{d:persistence-obstruction}). 
\begin{enumerate}
\item {\bf If} $k>0$: output ``robustness of zero is at least  $r_{k}-\alpha.$''\\
      {\bf Else}: output ``no guarantee of zero''
\item {\bf If} $\dim X\leq n+1$ or $n<3$:  output ``robustness of zero is at most $r_k+\alpha$'' \\
      {\bf Else}: output ``no guarantee of upper bound (robustness of zero is at most $\infty$)''
\end{enumerate}
\end{enumerate}}

The constraints in B(b) could be replaced by $\dim X\leq n-1+k$, if we used an oracle for persistence of the first $k$ obstructions.
However, implementing such an~oracle is theoretically possible only if $k<n-1$. In fact, even the special case $\dim X=4$ and $n=3$
is beyond this bound and we cannot implement the oracle for secondary obstruction for this dimension pair
with no restrictions on $X$. In the important special case when $X$ is topologically a cube $[0,1]^4$ and $n=3$, the general algorithm works 
with no essential changes. More details about the implementation of the oracle for this dimension pair
are given in Appendix~\ref{s:secondary-persistence}, p.~\pageref{h:mn=43}.

\begin{theorem}\label{t:main} The above algorithm outputs correct statements. 
\end{theorem}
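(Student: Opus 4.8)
The plan is to verify each of the four output lines of the algorithm separately, reducing everything to the interleaving and homotopy statements of Lemma~\ref{t:approx} together with the defining property of the oracle. Throughout, write $A=A_{r_k}$ for the complex at the persistence threshold of the secondary obstruction, and recall that $f'\colon A_{r_0}\to\Sigma^{n-1}$ is the simplicial approximation, which is legitimate because step A(a) only admits vertices with $|f(v)|\ge \alpha n^{1/p}$, so part~2 of Lemma~\ref{t:approx} applies to every $A_{r_i}$ in the filtration.

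First I would treat the lower bound in B(a). Suppose $k>0$ and, for contradiction, that some continuous $g$ with $\|g-f\|\le r_k-\alpha$ has no zero. Then $g/|g|\colon X\to S^{n-1}$ is defined on all of $X$. On the subspace $A(r_k)=\{x: |f(x)|\ge r_k\}$ we have $|g(x)|\ge |f(x)|-\|g-f\|\ge \alpha>0$, and in fact $g$ restricted there is homotopic to $f|_{A(r_k)}$ through $\R^n\setminus\{0\}$ via the straight-line homotopy, whose values have norm at least $r_k-(r_k-\alpha)>0$ pointwise — wait, more carefully, at a point with $|f(x)|\ge r_k$ the segment from $f(x)$ to $g(x)$ stays at distance $\ge r_k-\|g-f\|\ge\alpha>0$ from the origin. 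By part~1 of Lemma~\ref{t:approx}, $A(r_k)\supseteq A_{r_k+\alpha}$; hmm, that inclusion goes the wrong way, so instead I use $A_{r_k}\subseteq A(r_k-\alpha)$ and the fact that on $A(r_k-\alpha)$ one still has $|g|\ge \alpha-\ldots$ — the clean route is: $g/|g|$ is defined on $X\supseteq A_{r_k}\cup X^{(n)}$, and on $A_{r_k}$ the map $g/|g|$ is homotopic to $f'|_{A_{r_k}}$ (compose the straight-line homotopy $f\simeq g$, valid on $A(r_k-\alpha)\supseteq A_{r_k}$ since $|f|\ge r_k-\alpha$ there forces the segment away from $0$ once $r_k-\alpha\ge \|g-f\|$, wait we need strict — use that the hypothesis gives $\le r_k-\alpha$ and $|f|\ge r_k$ on vertices so $|f|\ge r_k-\alpha$ on $A_{r_k}$ hence the segment is nonzero; then compose with part~3's homotopy $f\simeq f'$ on $A_{r_k}$, valid once $r_k>\alpha n^{1/p}$, which holds by step A(a)). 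Thus $f'|_{A_{r_k}}$ extends over $A_{r_k}\cup X^{(n)}$, so it certainly extends over that same complex, contradicting that $r_k$ is the persistence of the \emph{secondary} obstruction — here I must be slightly careful: non-extendability to the $(n+1)$-skeleton does not literally give non-extendability to the $n$-skeleton, so the correct contradiction is with the secondary statement directly, using that $g/|g|$ extends $f'|_{A_{r_k}}$ over all of $X\supseteq A_{r_k}\cup X^{(n+1)}$.

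Next, the upper bound in B(b), under $\dim X\le n+1$ or $n<3$. Here $A_{r_k}\cup X^{(n+1)}=X$ by the dimension hypothesis (when $n<3$ a small separate argument handles $\dim X$ via the cases $n\le 2$, where the $(n+1)$-skeleton already captures everything relevant, or one invokes that the secondary obstruction is the last one), so the oracle tells us $f'|_{A_{r_k}}$ does not extend to $X\to S^{n-1}$ at all. By part~3 of Lemma~\ref{t:approx}, $f'|_{A_{r_k}}\simeq f|_{A_{r_k}}$ in $\R^n\setminus\{0\}$, so $f|_{A_{r_k}}$ does not extend either. Now take any continuous $g$ with $\|g-f\|\le r_k+\alpha$; I claim $g$ has a zero. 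If not, $g/|g|$ is defined on $X$ and restricts on $A(r_k+2\alpha)\subseteq A_{r_k+\alpha}\subseteq A_{r_k}$ — I need the inclusion landing inside $A_{r_k}$; using part~1, $A(r_k+2\alpha)\subseteq$ hmm, $A_{r+\alpha}\subseteq A(r)$ so $A_{r_k}\supseteq\ldots$ — the direction I actually want is an extension \emph{of} $f'|_{A_{r_k}}$, so I should perturb the other way: consider instead that non-extendability is monotone, i.e. if $f'|_{A_{r_k}}$ does not extend then neither does $f'|_{A_{s}}$ for $s\le r_k$ wait that's backwards too. The robust way: a nonzero $g$ with $\|g-f\|\le r_k+\alpha$ would give, on $A(r_k+\alpha+\varepsilon)$, a map homotopic to $f$; but I want to contradict non-extension from $A_{r_k}$, and $A(r_k+\alpha)\subseteq A_{r_k}$? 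No — larger threshold means smaller set, and $A(s)\subseteq A_{s-\alpha}$ needs checking. From part~1, $A(r+\alpha)\subseteq A_r$, i.e. $A(s)\subseteq A_{s-\alpha}$. So $A(r_k+\alpha)\subseteq A_{r_k}$ is false; rather $A(r_k+2\alpha)\subseteq A_{r_k+\alpha}$, not helpful. The correct pairing, which I will write out carefully, is: to get the upper bound one produces an explicit $g$ with $\|g-f\|\le r_k+\alpha$ and no zero, namely $g = f - (\text{a function that pushes } f \text{ off } 0)$ built from an extension witnessing that $f'|_{A_{r_{k+1}}}$ \emph{does} extend (since $r_k$ is the \emph{largest} non-extending value, $r_{k+1}$ extends), rescaled to magnitude $\ge r_{k+1}$ and interpolated with $f$ on $A_{r_k}$, incurring at most $\alpha$ extra error from the Lipschitz slack plus the gap $r_{k+1}-r_k\le$ (bounded by passing to the continuous filtration, where consecutive thresholds are arbitrarily close in the limit but here we just use $r_{k+1}\le$ something); cleanest is to work with the \emph{continuous} filtration $A(r)$ and note the persistence value is insensitive to which representative $r\in(r_{k-1},r_k]$ we pick, giving the $+\alpha$ slack from interleaving.

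I expect the genuine obstacle to be exactly this bookkeeping of which $\alpha$-shift goes in which direction when converting between the combinatorial filtration $A_{r_i}$, the geometric filtration $A(r)$, and the perturbation ball $\|g-f\|\le r$; the topology (homotopy invariance of obstruction-theoretic extendability, and the identification $A_{r_k}\cup X^{(n+1)}=X$ in the constrained dimension range) is routine given Lemma~\ref{t:approx}, but getting the constants $r_k-\alpha$ and $r_k+\alpha$ to come out with the right inequalities — and handling the $n<3$ case where $\dim X$ is unconstrained — will require a careful, honest accounting. The $n<3$ case in particular deserves its own paragraph: for $n\le 2$ the sphere $S^{n-1}$ has vanishing higher homotopy in the relevant range, so the secondary obstruction is the complete obstruction regardless of $\dim X$, which is what lets B(b) fire without a skeletal hypothesis. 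I would close by noting that when the oracle reports $k=0$ (or $k$ with $r_k$ below the valid range), the "no guarantee" outputs are vacuously correct, and when $\dim X>n+1$ and $n\ge 3$ the "$\le\infty$" upper bound is trivially correct, completing the verification of all branches.
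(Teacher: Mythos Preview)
Your treatment of B(a) is essentially the contrapositive of the paper's argument and, once the bookkeeping is straightened out, it works: the paper goes via the characterization ``$f$ has an $r$-robust zero iff $f|_{A(r)}$ does not extend to a nowhere-zero map on $X$'' (cited from \cite[Lemma~3.3]{nondec}), whereas you unpack this by producing $g/|g|$ directly. That is fine, and your observation that the contradiction is with the \emph{secondary} (not primary) non-extendability, since $g/|g|$ is defined on all of $X\supseteq A_{r_k}\cup X^{(n+1)}$, is exactly right.

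The genuine gap is in B(b). You start from the wrong piece of data: you use that $f'|_{A_{r_k}}$ does \emph{not} extend, and then try to show that every $g$ with $\|g-f\|\le r_k+\alpha$ has a zero---but that would be a \emph{lower} bound, not an upper bound. The upper bound requires exhibiting (or at least proving the existence of) a nearby $g$ with \emph{no} zero. The paper's move, which you never arrive at cleanly, is to use the \emph{other} half of the oracle's output: for any $r>r_k$ the restriction $f'|_{A_r}$ \emph{does} extend to $A_r\cup X^{(n+1)}$, hence (under $\dim X\le n+1$ or $n<3$, the latter handled as you sketch via $\pi_j(S^{n-1})=0$ for $j\ge n$) to all of $X$. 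Then Lemma~\ref{t:approx}(3) transfers this to $f|_{A_r}$, the interleaving $A(r+\alpha)\subseteq A_r$ pushes it to $f|_{A(r+\alpha)}$, and the cited characterization converts extendability into ``robustness $<r+\alpha$''. Letting $r\downarrow r_k$ gives the bound $r_k+\alpha$. Your attempted explicit construction of $g$ from an extension is in principle how one proves that characterization, but you leave it as a sketch with unspecified rescalings and interpolations; the paper simply invokes the lemma. Without either that lemma or a completed construction, your B(b) does not go through.
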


\begin{proof}
In~\cite[Lemma 3.3]{nondec} we showed that $f$ has an $r$-robust zero iff $f|_{A(r)}$ is not extendable to a nowhere zero function on $X$.

{\it Correctness of B(a)}
Assume that $k>0$ and let $r:=r_k$. Non-extendability of $f'|_{A_r}$ to the $(n+1)$-skeleton $X^{(n+1)}$ implies non-extendability to all of $X$.
By Lemma~\ref{t:approx}, $f'|_{A_r}$ is homotopic to $f|_{A_r}$ and hence non-extendability of the former implies non-extendability of the latter.
Further, the relation $A_r\subseteq A(r-\alpha)$ implies non-extendability of $f|_{A(r-\alpha)}$, which finally implies that $f$ has an $(r-\alpha)$-robust zero on $X$.

{\it Correctness of {B(b)}}
Let $r>r_k$ be arbitrary.
The assumption, the restriction of $f'$ to $A_{r}$ \emph{is} extendable to $A\cup X^{(n+1)}$. 
If $\dim X\leq n+1$, then this is equivalent to the extendability to all of $X$.
The cases $n<3$ reflect low dimensional phenomena: we will show that then the extendability to $A\cup X^{(n)}$ already implies the extendability 
to all of $X$.
If $n=1$, $f'$ has values in the $0$-sphere $S^0\in \{+,-\}$ and if it 
can be extended to the $1$-skeleton, we can assign a sign $+$ or $-$ to each connected component of $X$ and naturally extend to $X\to \{+,-\}$. 
If $n=2$, then $f'$ has values in a circle, $S^1$. Assume that it can be extended to $A\cup X^{(2)}\to S^1$.
Then any extension $A\cup X^{(j)}\to S^1$ of $f'$, $j\geq 2$, can be extended to $A\cup X^{(j+1)}$, because the restriction 
of $g$ to the boundary of any $(j+1)$-simplex $\Delta^{j+1}$ defines a map $\partial \Delta^{j+1}\to S^1$
from a $j$-sphere to the circle and such map is homotopic to a constant (see, e.g.~\cite[Chapter 4.1]{Hatcher}),
hence $\partial \Delta^{j+1}\to S^1$ can always be extended to all of $\Delta^{j+1}$.

Assume that $\dim X\leq n+1$ or $n<3$ and that $r>r_k$. 
Then $r>r_0$ and $f'|_{A_r}$ is well defined and homotopic to $f|_{A_r}$ by Lemma~\ref{t:approx}.
This implies the extendability of $f|_{A_r}$ to all of $X$ 
and the relation $A(r+\alpha)\subseteq A_r$ implies the extendability of $f|_{A(r+\alpha)}$. Thus the robustness is less than $r+\alpha$ for any
$r>r_k$, yielding an upper bound $r_k+\alpha$ on the robustness of zero.
\end{proof}

To conclude this section, we remark that
\begin{itemize}
\item In case when $f$ has no zero at all, we may easily approximate the \emph{robustness of non-existence of zero} by $\min_v |f(v)|-\alpha$.
\item The infinite bound in $B(b)$ can be improved to $\max_v |f(v)|+\alpha$.
\end{itemize}

\section{Persistence of obstructions}
\label{s:obstructions}
In this section we describe the algorithm for computing the persistence of primary obstruction and roughly outline the algorithm for secondary 
obstruction (which is described in more detail in Appendix~\ref{s:secondary-persistence}).

\heading{Primary obstruction---extendability to $X^{(n)}$.} 
Here we review some facts from obstruction theory. 
A reference for the next proposition can be a~textbook such as \cite[III, 1.2]{prasolov}. 
\begin{proposition}[Primary obstruction]\label{p:prim}
Let $A\subseteq X$ be a pair of simplicial complexes, $f\:A\to \Sigma^{n-1}$ be simplicial, $z\in C^{n-1}(\Sigma^{n-1};\Z)$ be a cocycle
generating the cohomology, and $y:=f^ \sharp(z) \in Z^{n-1}(A;\Z)$ its pullback. 

Then $f: A\to \Sigma^{n-1}$ can be extended to a (not necessarily simplicial) map $A\cup X^{(n)}\to\Sigma^{n-1}$, iff 
$y\in Z^{n-1}(A;\Z)$ can be extended to a~cocycle $x\in Z^{n-1}(X;\Z)$ such that $x|_A=y$.
\end{proposition}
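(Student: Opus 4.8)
The plan is to recall the standard cellular obstruction-theory argument and phrase it in terms of the cocycle $y = f^\sharp(z)$. First I would note that extending $f\colon A \to \Sigma^{n-1}$ over the $n$-skeleton $A \cup X^{(n)}$ is done one skeleton at a time. Since $\Sigma^{n-1} \simeq S^{n-1}$ is $(n-2)$-connected, any map from $A \cup X^{(k)}$ with $k \le n-1$ extends over $X^{(k)}$ without obstruction (each cell of dimension $\le n-1$ has boundary a sphere of dimension $\le n-2$, which is null-homotopic in $S^{n-1}$); and the chosen simplicial $f$ on $A$ is already defined there, so after collapsing we may assume we have an extension $g\colon A \cup X^{(n-1)} \to \Sigma^{n-1}$. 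The genuine obstruction lives in dimension $n$: for each oriented $n$-simplex $\Delta$ of $X$ not in $A$, the restriction $g|_{\partial\Delta}\colon \partial\Delta \cong S^{n-1} \to \Sigma^{n-1}$ has a degree, and assigning this degree to $\Delta$ defines the obstruction cochain $\mathfrak{o}(g) \in C^n(X, A; \pi_{n-1}(S^{n-1})) = C^n(X,A;\Z)$. Standard theory tells us $\mathfrak{o}(g)$ is a cocycle, $g$ extends over $X^{(n)}$ iff $\mathfrak{o}(g) = 0$, and modifying $g$ on the $(n-1)$-cells changes $\mathfrak{o}(g)$ by an arbitrary coboundary in $B^n(X,A;\Z)$; hence $f$ extends over $A \cup X^{(n)}$ iff the class $[\mathfrak{o}(g)] \in H^n(X,A;\Z)$ vanishes.

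Next I would translate the vanishing of $[\mathfrak{o}(g)] \in H^n(X,A;\Z)$ into the extendability statement about $y$. Consider the long exact sequence of the pair, or more directly work at the cochain level: there is a short exact sequence $0 \to C^\bullet(X,A;\Z) \to C^\bullet(X;\Z) \to C^\bullet(A;\Z) \to 0$. The claim is that the connecting map sends the cohomology class $[y] \in H^{n-1}(A;\Z)$ precisely to $[\mathfrak{o}(g)] \in H^n(X,A;\Z)$. To see this I would choose a cochain $\tilde y \in C^{n-1}(X;\Z)$ restricting to $y$ on $A$ — concretely, take $\tilde y$ to be the ``degree cochain'' of the extension $g$ on $(n-1)$-cells, i.e. $\tilde y(\sigma) = \deg(g|_\sigma)$ suitably interpreted, which agrees with $y = f^\sharp(z)$ on $A$ because $z$ is the generator of $H^{n-1}(\Sigma^{n-1})$ and $f$ is simplicial there. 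Then $\delta \tilde y \in C^n(X;\Z)$ vanishes on $A$ (since $y$ is a cocycle on $A$), so it lies in $C^n(X,A;\Z)$, and a direct check identifies $\delta\tilde y$ with the obstruction cochain $\mathfrak{o}(g)$: both evaluate on an $n$-simplex $\Delta \notin A$ to the degree of $g|_{\partial\Delta}$. Therefore $[\mathfrak{o}(g)] = \partial[y]$, and $[\mathfrak{o}(g)] = 0$ iff $[y]$ lifts to $H^{n-1}(X;\Z)$, which at the cochain level is exactly the assertion that $y$ extends to a cocycle $x \in Z^{n-1}(X;\Z)$ with $x|_A = y$ — one direction being that a cocycle extension $x$ gives $\delta(x) = 0$ hence the obstruction is a coboundary, the other that any lift can be corrected by a cochain vanishing on $A$ to become a genuine cocycle extension.

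The main obstacle I anticipate is the careful identification of the obstruction cochain $\mathfrak{o}(g)$ with $\delta\tilde y$, including getting the orientations and the identification $\pi_{n-1}(S^{n-1}) \cong \Z$ (via the chosen generator $z$) consistent, so that ``degree of $g|_{\partial\Delta}$'' literally equals ``$(\delta\tilde y)(\Delta)$''. This is where one must be careful that $\tilde y$ is built from the \emph{same} map $g$ whose obstruction we are measuring, and that the pullback $f^\sharp(z)$ restricted to $A$ matches. The rest is bookkeeping with the cochain exact sequence of the pair. Since the excerpt itself points to \cite[III, 1.2]{prasolov} as a reference, I would keep this at the level of an indication rather than reproving cellular obstruction theory from scratch, emphasizing only the dictionary between the obstruction class and the connecting homomorphism applied to $[y]$, and the cochain-level refinement needed to get the statement about $y$ itself (not just its class) extending.
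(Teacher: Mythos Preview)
The paper does not actually prove this proposition: it is stated as a known fact from obstruction theory with only the textbook reference \cite[III, 1.2]{prasolov} given. Your outline is precisely the standard argument one finds in such a reference --- extend over $A\cup X^{(n-1)}$ using $(n-2)$-connectedness of $S^{n-1}$, form the degree obstruction cochain $\mathfrak o(g)\in C^n(X,A;\Z)$, and identify its class with the image of $[y]$ under the connecting homomorphism $H^{n-1}(A)\to H^n(X,A)$ --- so your approach and the paper's (implicit) approach coincide. Your closing remark, that the cohomological lifting of $[y]$ must be upgraded to a genuine cochain-level extension $x|_A=y$ by subtracting a coboundary supported off $A$, is exactly the refinement the paper exploits immediately afterwards in the description of $\Omega(A)$ via equation~(\ref{e:E}).
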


Thus extendability of $f$ is reduced to extendability of an $A$-cocycle $y$ to a global cocycle $x$ defined on all of $X$.
We will use the notation $\Omega(A):=\{x\in Z^{n-1}(X;\Z)\: x|_A = y\}$ of all cocycle extensions and want to test its non-emptiness.
$\Omega(A)$ corresponds to solutions of a~linear equation over integers. 
To see that, let $\bar y\in C^{n-1}(X;\Z)$ be an arbitrary cochain (not necessarily a cocycle) such that $\bar y|_A=y$. We have that
\begin{equation}
\label{e:E} 
\Omega(A)=\{\bar y-c\: c\in C^{n-1}(X,A;\Z)\text{ such that }\delta c = \delta \bar y\}.\end{equation} 
Subtracting $c\in C^{n-1}(X,A;\Z)$ does not change the values on $A$-simplices, so any such $\bar y-c$ is still an extension of $y$.
The non-emptiness of $\Omega(A)$ is thus equivalent to solvability of the linear equation  $\delta c=\delta \bar y$ with the 
unknown $c\in C^{n-1}(X,A;\Z)$. 

A natural set of generators of $C^{n-1}(X,A;\Z)$ is the set of all $(n-1)$-simplices in $X$ that are not in $A$
with the identification between a simplex $\Delta$ and its \emph{characteristic cochain} that assigns $1$ to $\Delta$ 
and $0$ to all other simplices.  Converting $\delta c=\delta\bar y$ into an explicit matrix system of linear equations 
then amounts to enumerating the $(n-1)$- and $n$-simplices in $X\setminus A$, computing the codifferential matrix of $\delta$ 
using the definition of boundary and expressing the right-hand side $\delta\bar y$ in the basis of the $n$-simplices.

\heading{Persistence of the primary obstruction---the algorithm.} 
We recall that in the persistent setting the input contains
a~filtration of simplicial complexes $X\supseteq A_{r_0} \supseteq A_{r_1},\ldots,\supseteq A_{r_h}$ 
and a simplicial map $f': A_{r_0}\to \Sigma^{n-1}$.
We want to compute the largest value $j$ such that the restriction of $f'$ to ${A_{r_j}}$ cannot be extended to $A_{r_j}\cup X^{(n)}$.

Let $A:=A_{r_0}$ and $z,y$ be defined as above.
The cochain extension $\bar y$ of $(f')^\sharp(z)$ is also an extension of $(f'|_{A'})^\sharp(z)$ for each $A'\subseteq A$.
Thus we fix one $\bar y$ for all spaces $A_r$.
Then the only thing that is changing in solving $\delta c=\delta \bar y$, $c\in C^{n-1}(X,A_r;\Z)$ with increasing $r$,
is the requirement that $c$ should be zero on $A_r$ and is hence supported on $X\setminus A_r$. Note that
$X\setminus A_r$ becomes larger with increasing $r$: we are allowed to include more columns into our matrix representing $\delta$.

Now we describe the algorithm on a lower level.
\label{page:primary-persistence}
\begin{itemize}
\item
First we choose the cocycle $z\in\Sigma^{n-1}$ that generates the $(n-1)$ cohomology and its pullback
$y:=(f'|_A)^\sharp(z)\in Z^{n-1}(A;\Z)$ for $A=A_{r_0}$. 
%This is a cochain that assigns
%$1$ (or $-1$) to simplices $[v_1,\ldots,v_n]$ such that $(f'(v_1),\ldots,f'(v_n))$ is an even (or odd)
%permutation of $(e_1,\ldots,e_n)$, and $0$ if $\{f'(v_1),\ldots,f'(v_n)\}\neq\{e_1,\ldots,e_n\}$.
\item We fix an arbitrary extension $\bar y\in C^{n-1}(X;\Z)$ of $y$: the simplest option is to choose 
$\bar y(\Delta)=0$ for all $(n-1)$-simplices $\Delta\in X$ that are not in $A$.
\item We compute the filtration values of all $(n-1)$ simplices in $X$ by (\ref{e:filtration}). 
\item We order the $(n-1)$-simplices of $X$ by their filtration value, and choose an arbitrary enumeration of the $n$-simplices.
These choices will serve as bases of the $(n-1)$- and $n$-cochains (we identify simplices and their characteristic cochains).
\item We construct the matrix $M$ representing the codifferential with respect to the bases chosen above. The columns of $M$ are 
coboundaries of the $(n-1)$-simplices ordered by increasing filtration values.
%(Equivalently, we may compute $M$ as the transpose of the boundary matrix, which is computationally simpler.)
Further, we convert the right-hand side $\delta \bar y$ to an integer~vector $\aa$ using the chosen basis of $n$-simplices.
\end{itemize}
Recall that we want to solve $\delta c=\delta\bar y$ for $c$ that is a linear combination of $(n-1)$-simplices 
with filtration values at most  $r$, where $r$ is as small as possible. 
Such $r$ is then the desired persistence of the primary obstruction: indeed, it is the smallest $r$ such that $\delta\bar y$ 
can be expressed as a~coboundary $\delta c$ where $c$ has filtration at most $r$, but \emph{cannot} be expressed
as $\delta c$ so that $c$ has filtration strictly smaller than $r$.

This directly translates to the following problem, which is the last
step of the persistence-of-primary-obstruction algorithm.

\framedpar{
\label{p:es}
{\bf Problem {\sc Earliest Solution}}
\\
Input: A matrix $M\in \Z^{p\times q}$ and a column vector $\aa\in \Z^p$.
\\
Output: A column vector $\xx\in \Z^q$ such that $M\xx = \aa$.
\\
Objective: Minimize the index of the \emph{last nonzero entry} of $\xx$, that is, $\ell\ge 0$ such that $x_\ell\neq 0$ and $x_{\ell+1}= x_{\ell+2} =\ldots = x_q = 0$.}

The persistence of the primary obstruction is then the filtration value of the $l$-th column. 

The EARLIEST SOLUTION problem could be solved by binary search on the value $\ell$ while solving an~ordinary 
linear system of equations in each iteration. 
Our implementation uses a simple matrix reduction approach (resembling algorithms for persistent homology) 
which avoids the binary search (see Appendix~\ref{s:pers} for details).
%\footnote{It is provably better in the case of
%equations over a~finite field.  In the case of integers, a coefficient blowup could make the running time exponential in the worst case. 
%However, we noticed no  significant growth of coefficients in our instances where the matrix $M$ has a rather special structure.} 
%The implementation details of \textsc{Earliest Solution} are in Appendix~\ref{s:pers}.

\heading{Secondary obstruction---extendability to $A\cup X^{(n+1)}$.}
Computing the secondary obstruction and its persistence contains similar ingredients but is more technical and we postpone a lower-level
description to Appendix~\ref{s:secondary-persistence}. Here we outline the main steps for the non-persistent version with a fixed $A$.
We assume that $f': A\to\Sigma^{n-1}$ is extendable to $A\cup X^{(n)}$ and that $\Omega(A)$ (described by (\ref{e:E})) is nonempty.

We need to implement the ``Steenrod square'' operation on the level of cochains. 
We chose  to use the  notation  from  the original paper of Steenrod~\cite{Steenrod}
$$\smile_{n-3}: C^{n-1}(X;\Z_2)\times C^{n-1}(X;\Z_2)\to C^{n+1}(X;\Z_2)$$
which induces (when an element is ``multiplied'' by itself) the standard operation 
$$Sq^2: H^{n-1}(X;\Z_2)\to H^{n+1}(X;\Z_2)$$ on the level of cohomology for $n>3$ (similarly for relative cohomology).
The algorithm for $\smile_{n-3}$ directly follows from formulas in~\cite[p. 292--293]{Steenrod}. 
For the following facts, we refer to \cite{Steenrod} and \cite{Steenrod:CohomologyOperationsObstructionsExtendingContinuousFunctions-1972}:
\begin{proposition}[Secondary obstruction]
\label{p:secondary}
Let $A\subseteq X$ be a pair of simplicial complexes, $f': A\to\Sigma^{n-1}$ be simplicial and 
assume that the ordering of vertices of $X$ and $\Sigma^{n-1}$ is chosen so that
$v\preceq w\,\,\Rightarrow\,\,f'(v)\preceq f'(w)$.
For each $x\in\Omega(A)$ let $(x\mod 2)$ be the image of $x$ under the natural homomorphism $C^{n-1}(X;\Z)\to C^{n-1}(X;\Z_2)$.  
Then 
\begin{equation}
\label{e:steenrod}
v(x):=(x\mod 2)\smile_{n-3} (x\mod 2)
\end{equation} 
vanishes on $A$, that is, it is an~element of $Z^{n+1}(X,A;\Z_2)$.

Further, if $n>3$, then $f'$ can be extended to a map $X^{(n+1)}\to \Sigma^{n-1}$ iff $v(x)$ 
is a~relative coboundary for some $x\in\Omega(A)$.
\end{proposition}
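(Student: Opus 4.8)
The plan is to recast both claims in the language of classical obstruction theory for maps into $S^{n-1}$ and then to invoke Steenrod's cochain-level description of the secondary obstruction from~\cite{Steenrod,Steenrod:CohomologyOperationsObstructionsExtendingContinuousFunctions-1972}. The bridge is Proposition~\ref{p:prim} together with (the construction in) its proof: the assignment $g\mapsto g^\sharp(z)$ from extensions $g\colon A\cup X^{(n)}\to\Sigma^{n-1}$ of $f'$ to $\Omega(A)$ is well-defined and onto. Indeed $g^\sharp(z)\in Z^{n-1}(X^{(n)};\Z)=Z^{n-1}(X;\Z)$ restricts to $y$ on $A$; conversely, given $x\in\Omega(A)$ one builds such a $g$ skeleton by skeleton -- below dimension $n$ there is no obstruction since $\pi_{n-2}(S^{n-1})=0$, and on each $n$-cell the obstruction value equals $\delta x$ on that cell, which vanishes because $x$ is a cocycle. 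Moreover two extensions agreeing on $A\cup X^{(n-1)}$ have the same pullback, so modifications of $g$ confined to the interiors of the $n$-cells \emph{not} in $A$ leave both $g^\sharp(z)$ and $g|_A=f'$ unchanged.

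For Part~1 I would argue by naturality plus a dimension count. The cup-$i$ product $\smile_{n-3}$ is natural with respect to order-preserving simplicial maps, and the hypothesis $v\preceq w\Rightarrow f'(v)\preceq f'(w)$ makes $f'$ such a map, so $(f')^\sharp$ commutes with $\smile_{n-3}$. For an $(n+1)$-simplex $\Delta\in A$ we have $x|_A=y=(f')^\sharp(z)$, hence
$$v(x)(\Delta)=\Big((f')^\sharp\big((z\bmod 2)\smile_{n-3}(z\bmod 2)\big)\Big)(\Delta),$$
and $(z\bmod 2)\smile_{n-3}(z\bmod 2)\in C^{n+1}(\Sigma^{n-1};\Z_2)=0$ because $\dim\Sigma^{n-1}=n-1<n+1$; thus $v(x)$ vanishes on $A$. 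That $v(x)$ is a cocycle on $X$ follows from the Hirsch coboundary formula for $\smile_{n-3}$: modulo $2$, using $\delta(x\bmod 2)=0$, the two resulting $\smile_{n-4}$ terms are equal and cancel (and for $n=3$ one has $\delta(u\smile u)=0$ directly for a cocycle $u$). Hence $v(x)\in Z^{n+1}(X,A;\Z_2)$.

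For Part~2, fix $x\in\Omega(A)$ and an extension $g$ of $f'$ over $A\cup X^{(n)}$ with $g^\sharp(z)=x$. Since $n>3$ we have $\pi_n(S^{n-1})\cong\Z_2$, so the obstruction to extending $g$ over $A\cup X^{(n+1)}$ is a cocycle $\mathfrak{o}(g)\in Z^{n+1}(X,A;\Z_2)$ (it vanishes on $A$ because $g$ is already defined there), and $g$ extends over $A\cup X^{(n+1)}$ precisely when $\mathfrak{o}(g)=0$ as a cochain. Steenrod's theorem~\cite[pp.~292--293]{Steenrod} (see also~\cite{Steenrod:CohomologyOperationsObstructionsExtendingContinuousFunctions-1972}) identifies this obstruction cocycle, at the cochain level, with the cup-square $(x\bmod 2)\smile_{n-3}(x\bmod 2)=v(x)$. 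The difference-cochain calculus then closes the loop. If $v(x)=\delta w$ with $w\in C^n(X,A;\pi_n(S^{n-1}))=C^n(X,A;\Z_2)$, choose a modification $g'$ of $g$ on the interiors of the $n$-cells not in $A$ realizing the difference cochain $w$; then $g'|_A=f'$, $(g')^\sharp(z)=x$ and $\mathfrak{o}(g')=\mathfrak{o}(g)-\delta w=0$, so $f'$ extends over $A\cup X^{(n+1)}$. Conversely, if $f'$ extends to $h$ on $A\cup X^{(n+1)}$, then $x:=(h|_{A\cup X^{(n)}})^\sharp(z)\in\Omega(A)$ satisfies $v(x)=\mathfrak{o}(h|_{A\cup X^{(n)}})=0$, a relative coboundary. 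When $\Omega(A)=\emptyset$ both sides fail and the equivalence is vacuous.

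The main obstacle is the careful import of Steenrod's result -- that the secondary obstruction cocycle equals the $\smile_{n-3}$-square of the primary obstruction cochain exactly, not merely up to a coboundary -- and the bookkeeping needed to make the combinatorial and topological pictures match: fixing compatible vertex orders on $X$ and $\Sigma^{n-1}$ so that $f'$ is order-preserving (so that $(f')^\sharp$ commutes with $\smile_{n-3}$), and verifying that $z$ and $y=(f')^\sharp(z)$ are precisely the primary data on which $Sq^2$ acts. The remaining ingredients -- surjectivity of $g\mapsto g^\sharp(z)$ onto $\Omega(A)$, the difference-cochain formula, and $\pi_n(S^{n-1})\cong\Z_2$ for $n>3$ -- are standard (see e.g.~\cite{prasolov}).
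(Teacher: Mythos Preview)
The paper does not prove this proposition at all: it is stated as a fact with the preface ``For the following facts, we refer to \cite{Steenrod} and \cite{Steenrod:CohomologyOperationsObstructionsExtendingContinuousFunctions-1972}.'' Your sketch is a correct and faithful unpacking of what those references establish --- naturality of $\smile_{n-3}$ under order-preserving simplicial maps for Part~1, Steenrod's cochain-level identification of the secondary obstruction with the $\smile_{n-3}$-square together with the difference-cochain calculus for Part~2, and $\pi_n(S^{n-1})\cong\Z_2$ for $n>3$ --- so there is nothing to compare beyond noting that you have supplied an argument where the paper only gives a citation.

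One small caution worth flagging explicitly (you allude to it at the end): Steenrod's identification $\mathfrak o(g)=v(x)$ at the \emph{cochain} level holds for a particular construction of the extension $g$ from $x$, not for an arbitrary $g$ with $g^\sharp(z)=x$. Different such $g$ (differing already below the $(n-1)$-skeleton in homotopically trivial but combinatorially nontrivial ways) may have obstruction cochains differing by a coboundary. This does not affect the \emph{statement} you are proving, since the conclusion is phrased in terms of $v(x)$ being a relative coboundary; but your sentence ``identifies this obstruction cocycle, at the cochain level, with \ldots $v(x)$'' should be read as ``for Steenrod's specific extension,'' and the difference-cochain step then absorbs any discrepancy. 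With that reading your argument is sound.
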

Thus extendability to $X^{(n+1)}$ is equivalent to satisfiability of the equation $\delta c=v(x)$, $c\in C^n(X,A;\Z_2)$, 
for \emph{some} $x\in\Omega(A)$. To decide this, we parameterize $\Omega(A)$ by a fixed representative $x$ and generators $g_j$ of 
$Z^{n-1}(X,A;\Z)$: an~arbitrary element of $\Omega(A)$ is then $x-\sum_j u_j g_j$ for some $u_j\in\Z$.
To reduce the number of $j$'s, we only need to take generators of the cohomology group $H^{n-1}(X,A;\Z)$.
Exploiting the linearity of the operation $v$  on the level of cohomology (\cite[p. 504]{Steenrod}), we have that 
$v(x-\sum_j u_j g_j)$ is a coboundary iff $v(x)-\sum_j u_j v(g_j)$ is a coboundary. Thus our equation reduces to
$\delta c + \sum_j u_j v(g_j)=v(x)$.
This is a~system of equations with right-hand side $v(x)$ and unknowns $c$ and $u_j$, this time over the $\Z_2$-coefficients.

We also remark that the last proposition is valid also in the case $n=3$ once we replace $\Z_2$-coefficients by $\Z$-coefficients and the $\smile_3$
operation by the cup product. However, deciding whether there exists an $x$ such that  $x\smile x$ is a~coboundary, is hard (and undecidable for general spaces $X$).
%because in this dimension, the operation $\smile_{n-3}$ does not induce a~$\Z$-linear map on the level of cohomology,
%but rather a quadratic one. 
We show at the end of Appendix~\ref{s:secondary-persistence} 
that if $X$ is a triangulation of the topological cube $[0,1]^4$ and $n=3$, then
triviality of the secondary obstruction can easily be tested as well: this case is also included in our implementation.

To compute the largest $r$ such that the map $f'|_{A_r}$ is not extendable to $A_r\cup X^{(n+1)}$, we could use a binary search.
As in the case of the primary obstruction, it can be avoided and we can compute the persistence of the secondary obstruction
using a~single matrix reduction: this is explained in Appendix~\ref{s:secondary-persistence}.

\section{Application for robust optimization}
\label{s:rob-opt}
\heading{Reduction of robust optimization to the ROB-SAT problem.}
Our algorithmic approach has a natural extension for  optimization with uncertainty. 
%For simplicity, we focus on the setting where the uncertainty radius $r$ is fixed. 
We pose the following optimization problem:
\begin{equation}
\label{e:rob-opt}
\begin{array}{rl}
\text{maximize}& o(x)\\
\text{subject to}& g(x)=0\\
& x\in X
\end{array}
\end{equation}
where $X$ is a compact domain\footnote{Such a~domain implicitly imposes inequality constraints which can be seen as uncertain ones as well if the chosen norm on $\R^n$ is $\ell_\infty$, see \cite{nondec}. Also the function $o$ could be considered as uncertain without adding further complexity to the problem, but we prefer to have the statement as simple as possible.} 
and $g\:X\to\R^n$ is uncertain.
Let as assume, for simplicity, that $r>0$ is fixed and $g$ is an~unknown continuous $r$-perturbation of a~known given map 
$f\:X\to\R^n$. A simple instance of the above problem is visualized below:
\begin{center}
\includegraphics{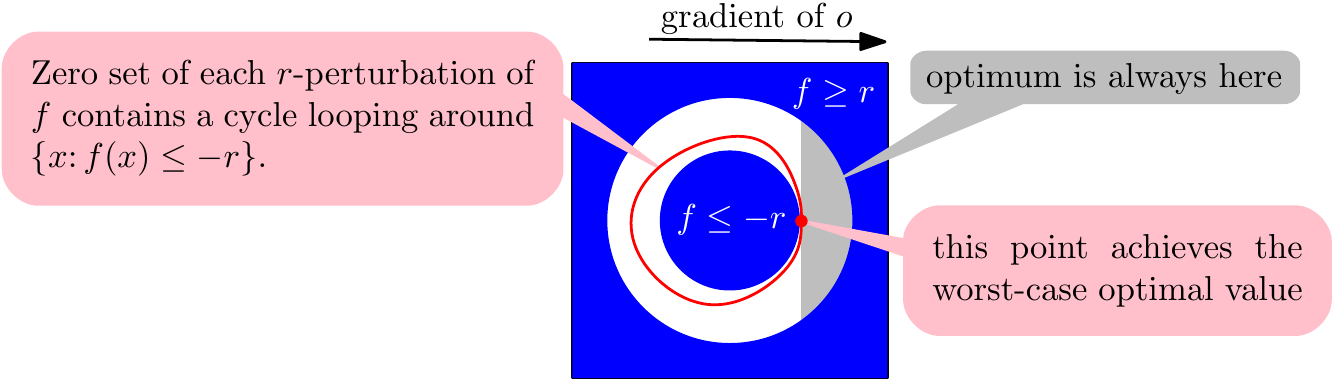}
\end{center}
Both $f$ and $o\:X\to\R$ can be specified in various ways but let us further assume that they are \emph{simplexwise linear} 
and that we know their values on vertices. 

We remark that in a common approach the uncertainty is parametrized, that is, in the problem above, the constraints would have the form $g_p(x)=0$ (or  $g_p(x)\leq 0$), where $p$ is an unknown vector-valued parameter (see~\cite{Ben:2009}). 

The common goal is to compute the optimal value in the worst case, i.e., $$\inf_{\|g-f\|\leq r}\,\max_{x\in g^{-1}(0)} o(x)$$ in our case. The worst-case
optimal value is equal to the maximal number $\beta\in \R$ such that $f$
has an $r$-robust zero on $o^{-1}[\beta,\infty)$. 
%The domain $o^{-1}[\beta,\infty)$ can be equivalently replaced \peter{not sure..}
%by its deformation retract   $X_\beta$ defined as the simplicial subcomplex
%of $X$ spanned by the vertices $v$ with $o(v)\ge\beta$.  
If $o$ is simplexwise linear, then $o^{-1}[\beta,\infty)$ can be triangulated and 
the existence of an~$r$-robust zero on $o^{-1}[\beta,\infty)$ can be algorithmically tested via computing higher-order obstructions,
whenever $\dim X\le 2n-3$ or $n<3$~\cite{nondec}. 
The exact worst-case optimal value can be found by doing a binary search on the maximal value 
$\beta$ and using the ROB-SAT algorithm \cite{nondec} in each step.

\heading{Efficient implementation.} Also the efficiency--tuned algorithm presented in this paper can be tweaked into 
the setting of optimization very easily and thus the binary search 
avoided.
We may assume that both $f$ and $o$ are only given via function values in vertices and simplex-wise Lipschitz constants, and want
to approximate $\inf_{\|g-f\|\leq r} \max_{x\in g^{-1}(0)} o(x)$ for some $r$.\footnote{To avoid further simplicial subdivisions, we again need to assume that $r>\alpha n^{1/p}$, i.e., that the description
of $f$ is fine-grained enough for the retrieval of the homotopy
class of $f|_{A_r}$).} 

The only difference occurs before each call of {\sc Earliest Solution} subroutine where we sort the rows of the matrix $M$ and the right-hand side $\aa$ ($n$-simplices in the case of primary obstruction) according to their $o$-filtration value (minimum of $o(v)$ over their vertices $v$). Also we cut off the columns of the matrix with filtration value larger than $r$. After the column matrix reduction as described in Appendix~\ref{s:pers}, the desired approximation of the worst-case optimal value is the $o$-filtration value corresponding
to the row of the lowest nonzero element on the right hand side
$\aa$ after the reduction.

We can immediately compute a lower bound\footnote{An upper bound is obtained when the dimension is at most $n$ or $n\le 2$ for the primary obstruction and at most $n+1$ for the secondary obstruction.} on the \emph{uncertainty-optimality curve }
$\text{OPT}(r):=\inf_{\|g-f\|\leq r}\,\max_{x\in g^{-1}(0)} o(x)$ as the $o$-filtration value of the lowest nonzero entry of the right-hand side after the reduction by the column of filtration value $r$. These values are just a side product of the matrix reduction algorithm in Appendix~\ref{s:pers}.
The error in this approximation is bounded by the simplexwise Lischitz constants for $f$ and $o$.

\section{Experimental results}\label{s:experiment}
\label{s:experiments}
\heading{Motivation.}
One motivation for implementing the algorithm was to experimentally analyse the following question: 
\begin{itemize}
\item How typical is a~situation in which the zero cannot be detected by primary obstruction and higher obstructions are needed?
\end{itemize}

To illustrate the flavour of this problem, consider a~function $f$ from an $(n+1)$-ball $B^{n+1}$ to $\R^n$ such that $0$ is a~regular value of $f$ and
the zero set is a~circle. If $r$ is small, then the $r$-neighborhood of the zero set is homeomorphic to a~solid torus $S^1\times B^n$.
An $n$-hyperplane intersecting the zero set transversally will typically 
intersect this torus in a $n$-disc $\{*\}\times B^n$ with a~zero of $f$ inside: this reflects the non-extendability to the $n$-skeleton.
However, with increasing $r$ (and hence increasing our freedom to perturb the function), 
the primary obstruction will die once the $r$-neighborhood touches the boundary or becomes a full $(n+1)$-ball:
in the latter case, a~nontrivial secondary obstruction is reflected by the homotopy class of the map from the boundary of this 
$(n+1)$-ball to $S^{n-1}$. This homotopy class is encoded in the gradient-induced framing of the original zero set of $f$: if the framing 
is trivial (framed null-cobordant), then higher obstructions don't occur. If the framing is ``twisted'', then they do.

Intuitively, we assumed that using Gaussian random fields, the gradient-induced framing of the zero set should be quite random and we would observe
twisted as well as untwisted cases. Experiments, however, do not support this so far, which we find surprising.

\heading{Description of the computation experiments.}
The lowest-dimensional case where nontrivial secondary obstruction can occur is $\dim X=4$ and $n=3$. 
Using an experimental approach, we generated random continuous functions from a~regular $4$-dimensional cubical grid into $\R^3$ 
taken from different probability distributions. 
The space $X$ was either a $4$-cube or a $4$-torus $(S^1)^4$ and
the underlying simplicial complex was the Freudenthal triangulation of the canonical cubical subdivision of $X$~\cite[p. 154]{allgower2003introduction}.
Instead of $A_r$ from Definition~\ref{d:A-r}, we used a coarser filtration $A_r^\square$ based on the cubical structure,
see Appendix~\ref{s:implementation} for details.
We computed the vertex-approximation $f'$ from Definition~\ref{d:A-r} and the smallest $r_0>0$ such that
$f'$ is simplicial on $A_{r_0}^\square$. Then we found the persistence of the primary obstruction $r_1\geq r_0$ and the persistence
of the secondary obstruction $r_2\geq r_1$: the goal was to check whether instances with $r_2>r_1$ occur and how often. 

First we experimented with Gaussian random fields. 
Such functions are continuous and infinitely differentiable~\cite[Sec. 2.2]{Adler:1981}. 
For each component $f_i$ of $f$ and each vertex $x$, 
the random variable $f_i(x)$ was normalized to the standard normal distribution $N(0,1)$ 
and the covariance between $f_i(x)$ and $f_i(y)$ was taken to be $C(x,y)=\tilde{C}(|x-y|)$: we tried different functions $C$.
First we generated random functions such that the 
discrete Fourier transform of $C(0,x)$ was proportional to $((1+|p|^2)^{-l})_{p\in \{0,\ldots, g-1\}^4}$ for various constants $l$ 
(compare~\cite[p. 12]{Lang:2011}). The value $l=0$ corresponds to white noise and $l=\infty$ to constant functions.
While this procedure naturally creates functions on a torus, for experiments on a cube  we generated a~random function
on the discrete torus $\{1,2,\ldots, 2g\}^4$ and restricted it to the coordinates $\{1,\ldots, g\}^4$ to avoid periodicity.
The three components of $f$ were generated independently. To assure that the resulting function has zero at all, we analyzed the function
$f(x)-f(x_0)$ instead of $f(x)$, where $x_0$ was the midpoint of the cube, resp. a~fixed point in the torus.

In most cases, we detected a nontrivial primary obstruction, but not a single instance with secondary obstruction $r_2>r_1$.
To give an illustration, the following table shows some statistics of one of the experiments on a $4$-cube:
$l$ is the parameter of the distribution,
$g$ is the number of vertices in each dimension, 
$r_0$ the smallest value for which $f'|_{A_{r_0}^\square}$ is simplicial,
$r_1$ the average persistence of the primary obstruction in cases when $r_1>r_0$, 
and max. $r_1$ the largest persistence of primary obstruction. 
The averages are taken out of 1000 functions for $l\in \{3, 3.5,4, 4.5\}$ and out of 10\,\,000 for $l=5.0$.
\\

\begin{tabular}{c|c|c|c|c|c}
$l$ & $g$ & $r_0$ & \% of $r_1>r_0$ & average $r_1$ if nontrivial & max. $r_1$\\
\hline 
3.0  & 30 & $0.4$ & 78\% & $0.56$ & $0.95$ \\
3.5  & 30 & $0.21$  & 91\% & $0.41$ & $0.82$\\
4.0  & 25 & $0.15$ & 91\% & $0.32$ & 0.66 \\
4.5  & 25 & $0.1$ & 91\% & $0.25$ & $0.55$ \\
5.0  & 20 & $0.1$ & 87\% & $0.21$ & $0.63$ \\
\end{tabular} \\

When performing such experiments on the $4$-torus, it sometimes happened that the cup square of a computed extension $x\in\Omega(A_{r_1})$
was nontrivial in $H^4(X,A_{r_1}^\square)$, giving some ``hope'' of a nontrivial secondary obstruction: however, in all cases, this could be removed 
after replacing $x$ by another extension of the pullback $y=(f'|_A)^*(z)$ to the $2$-skeleton 
(see Section~\ref{s:obstructions}).\footnote{In fact, nontriviality
of the secondary obstruction on a $4$-torus can only be reduced to a~system of quadratic Diophantine equations. While we cannot 
algorithmically check satisfiability of quadratic equations, 
in all cases where we had to deal with this problem, these equations were almost trivial and solvable.}

In other rounds of experiments, we generated functions from a $5$-torus into $\R^4$ or
replaced the correlation function $C(x,y)$ by the Gaussian function
$$\exp\left(-\frac{|x-y|^2}{2l^2}\right)$$ for suitable $l>0$, but the results were were similar to that from the distribution above.

In another attempt to detect secondary obstruction in random fields we generated random homogenous quadratic polynomials on $[-1,1]^4$.
The coefficients $a_{ij}^k$ in $f_k(x)=\sum_{i,j} a_{i,j}^k x_i x_j$ were independent samples from a standard normal 
distribution.\footnote{This is motivated by the fact that the simplest examples of functions with nontrivial secondary obstruction are quadratic
and homogenous.}
The zero set of homogenous quadratic functions is either the origin alone or a cone intersecting the boundary $\partial [-1,1]^4$:
only the first case can yield a nontrivial $H^4(X,A_r^\square)$ and a nontrivial secondary obstruction. 
We generated around 70 thousand instances of random quadratic functions on a $10^4$ grid: around 2.2\% of them
had only the origin as the zero set, but there was no nontrivial secondary obstruction in a single instance. 

\heading{Possible explanations.}
One observation related to the lack of secondary obstruction is that the cohomology in dimension $n+1$ has typically 
lower persistence than in dimension $n$ and most generators have already died when the primary obstruction (element of~$H^n$)
dies. A similar phenomenon occurs in persistent homology of excursion sets of random scalar fields, where the persistence
barcodes in dimension $0$ die before the barcodes in dimension 1, compare~\cite{Adler:2010}. 
In the vast majority of our experiments on the $4$-cube, the $4$-dimensional cohomology group $H^4(X,A_{r_1}^\square)$ 
was trivial for $r_1$ being the persistence of primary obstruction.
The lack of top dimensional cohomology in this case probably reflects the fact that most components of the neighborhood of the zero set intersect 
the boundary of the domain, although this argument does not apply for the torus. 

Another remark possibly explaining the lack of secondary obstruction is the following idea. If the codimension is one, such as in our experiments, then the 
generic zero set is a~union of circles. The presence of a non-trivial secondary obstruction implies that the gradient-induced framing on the the zero set is not
framed null-cobordant in $X\setminus A$ (see~\cite[Thm C]{Persistence_of_zero}). For any circle in the zero set of $f$, either the circle is small, or it is large. Derivatives of random Gaussian fields are
themselves random Gaussian fields and hence, if the circle is small, then the framing vectors are more likely to be close to constants and hence ``untwisted''. In the other extreme,
if the circle is large, then the framing may  be twisted, but it is quite likely that any filler of the circle contains ``large'' values of~$f$. But then the primary obstruction $r_1$ may
be  large enough to ``outvoice'' a~potential secondary obstruction $r_2>r_1$: namely,  $X\setminus A_{r_1}$ may become so large that the framing is already null-cobordant there. 

Laying  down the groundwork for a~solid theory which would explain this phenomenon is the subject of future research.

\heading{Experiments with formulas.}
Another motivation for implementing the algorithm was to test the running time and memory limitations in practice. 
Our testing benchmarks consisted of inputs in which the function values $f(v)$ were generated via formulas with known properties 
in a cubical grid. We ran many testing examples, some of them being shown in Appendix~\ref{s:formulas}.
To summarize the results, the performance is much better than the worst-case complexity bound derived in Section~\ref{s:complexity}
and is approximately linear in the number of simplices of the input. We were able to run benchmarks up to $\dim X=8$ for small grids,
such as $5^8$: the largest coboundary matrix for which we computed a nontrivial obstruction had 40 million columns.

In higher dimension, the main obstacle is  the size of the input rather then the complexity of our algorithm. 
It is an interesting open question
whether some different approach exists for approximating the robustness of zero  in high-dimensional spaces, provided that the input has a~``small'' format, such as an explicit system of
equations given by formulas.

\section{Complexity}
\label{s:complexity}
The input size (and hence computational complexity) depends heavily on the encoding of the simplicial complex.
For example, we may 
specify the set of \emph{all} simplices, or the set of all top-dimensional simplices.\footnote{In other situations,
the input specifying the simplicial complex could be even smaller. 
One example is specifying the vertex set in $\R^m$ and assuming the Delaunay triangulation.}
Therefore we study \emph{parameterized complexity} as a function of the following parameters.
Let $m$ be the dimension of $X$ and $n\leq m$ the dimension of the target space $\R^n$.
We define $N$ to be the maximum of the number of $k$-simplices
for $k\in \{n-1,n,n+1\}$ and $V$ the number of vertices. 
In addition to specifying $X$, the input contains the function values in all vertices, that is, $n\times V$ numbers.

We present complexity bounds as a~function of $N,V,m$ and $n$. 

\heading{Primary obstruction.}
We assume that the function values $f(v)$ at the vertices are all rational vectors and that we can compare their absolute values 
$|f(v)|$, $|f(w)|$ in unit time (these numbers may be roots of rational numbers for $\ell_p$ norms).
Then computing the vertex approximation $f'(v)$ for each vertex $v$ via~(\ref{d:vertex-app}) amounts to $O(nV)$ operations.
Computing the filtration of all $(n-1)$-simplices via formula (\ref{e:filtration}), as well as the pullback $y$ and its codifferential
 are by definition subroutines of complexity $O(nN)$; ordering the $(n-1)$-simplices by filtration is done in $O(N\log N)$.
The computation of the codifferential matrix is again of order $O(nN)$ if we store it in a~sparse format, 
because each row of the matrix corresponds to the boundary of an $n$-simplex and has only $n+1$ nonzero elements. 

The bottleneck of computing the primary obstruction is the EARLIEST SOLUTION algorithm described on page~\pageref{p:es}. 
An implementation based on a~binary or exponential search requires at most $\log N$ 
solutions of a linear system of Diophantine equations. Each of them is a system of at most $N$ rows and columns, all coefficients being $\pm 1$ or $0$.
By~\cite[Thm. 19]{storjohann1996fast+} we may solve any such Diophantine system in $O(N^4\,\log^4 N)$ time, which yields 
$O(N^4\,\log^5 N+n(N+V))$ as a complexity bound for the primary obstruction. 
Assuming the lack of blowup of matrix coefficients during the matrix reduction, we can bound the number of arithmetic operations in EARLIEST SOLUTION  by $O(N^3)$. This is discussed in more detail in Appendix~\ref{s:pers}. 
%In practice, coefficients grow sublinear for examples of $N$ up to ten millions that we have tried.
In this scenario, sub-cubic bounds could be achieved using randomization~\cite[Thm. 39]{storjohann2005shifted}.
In practice, however, our implementation
of~EARLIEST SOLUTION exhibits subquadratic scaling, allowing us to experiment with instances
for $N \le 10^7$. This is not entirely surprising---large instances of simplicial boundary matrices
are commonly reduced in the field of computational topology. 

\heading{Secondary obstruction.}
The bottleneck in the secondary obstruction algorithm is the computation of all Steenrod squares of all the generators of $H^{n-1}(X,A_r;\Z)$ 
for all filtration values $r$. In a naive implementation we may compute a set of generators of $Z^{n-1}(X;\Z)$ 
and their respective filtration values.
Generators of the kernel (over $\Z$) of a~matrix with at most $N$ rows and columns can be computed in 
$O(N^4)$~\cite[Theorem 1]{Buchmann:99}.
The number of such generators is bounded by $N$. In the Steenrod square computation,
we need to compute, in the worst case, the values on all $(n+1)$-simplices; in each evaluation, the formula for $\smile_{n-3}$
described in~\cite{Steenrod} contains an iteration over all elements of $m\choose 4$ (Steenrod pairs).
Thus, computing the Steenrod squares of the generators of $Z^{n-1}(X;\Z)$ is $O(N^2\,m^4)$.
The final matrix computation corresponding to equation (\ref{e:second-pers3}) is done over the field $\Z_2$ 
which only requires $O(N^\omega)$ operations for a constant $\omega<3$~\cite[Proposition 6]{jeannerod2013rank}.
This yields a~complexity bound of $O(N^4+N^2 m^4)$ for the persistence of secondary obstruction. For all practical purposes,
it is safe to assume that the values of $m$ can be ignored.

\subsection*{Acknowledgements}
We thank Robert Adler for the discussion on random Gaussian fields, and Eric Wofsey for his hints on math.stackexchange 
regarding the triviality of the cup products $H^2(X,A)\times H^2(X,A)\to H^4(X,A)$ for contractible $X$~\cite{Wofsey-stack}.
Further, we thank both Institute of Computer Science of the Czech Academy of Sciences as well as IST Austria for providing
 computer power for our computational experiments.

\bibliographystyle{spmpsci}      % mathematics and physical sciences
%\bibliographystyle{spphys}       % APS-like style for physics
%\bibliography{}   % name your BibTeX data base

\bibliography{../Postnikov,../sratscha}

\newpage
\appendix

\section{Secondary obstruction.}
\label{s:secondary-persistence}
\heading{Persistence of the secondary obstruction---the algorithm for $n>3$.} 
Assume that a filtration $X\supseteq A_{r_0}\supseteq A_{r_1}\supseteq \ldots$ and a simplicial map $f': A_{r_0}\to\Sigma^{n-1}$
are given, $n>3$, and vertices on $X$ and $\Sigma^{n-1}$ are ordered so that $f'$ is order-preserving: this order is used
in the implementation of the $\smile_{n-3}$ operation on the level of cochains.
Further, we assume that the persistence of primary obstruction $r_j$ has already been computed by the algorithm
described on page~\pageref{page:primary-persistence}.
That is, the restriction of $f$ to $A_{r_j}$ is not extendable to some continuous map $A_{r_j}\cup X^{(n)}\to\Sigma^{n-1}$, but the restriction to $A_{r_{j+1}}$ is extendable. 
We continue to use the notation of Section~\ref{s:obstructions}: in particular, $z$ is the characteristic cocycle of a~fixed
$(n-1)$-simplex in $\Sigma^{n-1}$, $\bar y\in C^{n-1}(X,A_{r_0})$ is a cochain extending the pullback $y=(f')^\sharp(z)\in Z^{n-1}(A_{r_0})$ 
of $z$ and  $\emptyset\neq \Omega(A_{r_{j+1}})$ is the set of all $(n-1)$-cocycles on $X$ that extend $y$ on $A_{r_{j+1}}$. 

By Proposition~\ref{p:secondary}, the persistence of secondary obstruction is the largest number $r_k$ such that
\begin{equation}
\label{e:second-pers1}
\delta c=v(x), \quad  c\in C^n(X,A_{r_k};\Z_2),\,\, x\in \Omega(A_{r_k})
\end{equation} 
has no solution (where $v$ is defined by (\ref{e:steenrod})).

Let $x\in\Omega(A_{r_{j+1}})$ be a~fixed extension of $\bar y$, computed in the algorithm for primary persistence. 
Then also $x\in\Omega(A_{r_k})$ for each $k > j$.
For any such $k$, $\Omega(A_{r_k})$ is a coset in $Z^{n-1}(X)$
$$
\Omega(A_{r_k})=x+Z^{n-1}(X,A_{r_k};\Z)
$$
and hence equation (\ref{e:second-pers1}) reduces to
\begin{equation}
\label{e:second-pers2}
\delta c=v(x-w), \quad \text{for some } w\in Z^{n-1}(X,A_{r_k};\Z), \,\, c\in C^n(X,A_{r_k};\Z_2).
\end{equation} 
The crucial property we will use is that $v(x-w)$ is a relative coboundary iff $v(x)-v(w)$ is a relative coboundary:
this follows directly from the linearity of the Steenrod square operation $H^{n-1}(X,A;\Z_2)\to H^{n+1}(X,A;\Z_2)$ for $n>3$.
Thus we can reformulate (\ref{e:second-pers2}) to the problem of finding the maximal $r_k$ such that
\begin{equation}
\label{e:second-pers3}
\delta c+v(w) = v(x), \quad  w\in Z^{n-1}(X,A_{r_k};\Z), \,\, c\in C^n(X,A_{r_k};\Z_2).
\end{equation}
has no solution.
To simplify the computations, we don't need to consider all cocycles $w\in Z^{n-1}(X,A_r)$ 
but only generators of the cohomology $H^{n-1}(X,A_r;\Z)$: the Steenrod square of any relative coboundary is again a relative 
coboundary $\delta c'$ for some $c'\in C^{n}(X,A_r;\Z_2)$, so adding it has no impact on the solvability of (\ref{e:second-pers3}).

The right-hand side of (\ref{e:second-pers3}), $v(x)$, is a cocycle that does not 
depend on $k$ (assuming $k>j$). The left-hand side is a~combination of coboundaries of characteristic cocycles of $n$-simplices $\Delta^n$
and cochains of the form $v(w)$ for $(n-1)$-cocycles $w$. To each $\Delta$ and $w$ is assigned a filtration value and 
we want to minimize the value $r_k$ such that $v(x)$ \emph{can} be expressed as a~combination of $\delta c$'s and $v(w)$'s
such that $c$ and $w$ have  filtration values $\leq r_k$, but \emph{cannot} be expressed as a~combination of such cochains with filtrations of
$c$ and $w$ \emph{strictly smaller} than $r_k$.

Summarizing the above steps, we obtain the following algorithm:
\begin{itemize}
\item Order the vertices  of $\Sigma^{n-1}$ and the vertices of $X$ so that $f$ is order-preserving
\item For a precomputed $x\in\Omega(A_{r_{j+1}})$, compute the relative cocycle $v(x)\in Z^{n+1}(X,A_{r_{j+1}};\Z_2)$ by the 
definition of the Steenrod operation $\smile_{n-3}$
\item Compute a subset $W\subseteq Z^{n-1}(X)$ that contains all cohomology generators $w$ of all $H^{n-1}(X, A_r;\Z)$ for all $r > r_j$.
(It may be a~set of generators of $Z^{n-1}(X)$.) To each $w\in W$, assign its filtration value $r(w)$ 
to be the minimal $r$ such that $w$ is zero on $A_r$.
\item Compute the filtration value of all $n$-simplices, using (\ref{e:filtration}). 
\item Order all $n$-simplices and elements of $W$ by their filtration.
\item Choose a basis of $(n+1)$-simplices and express the right-hand side $v(x)$ as a~vector $\aa\in (\Z_2)^q$. 
\item Create a matrix $M$ whose column set consists of
\begin{itemize}
\item all coboundaries of characteristic cochains of $n$-simplices
expressed in the basis of $(n+1)$-simplices over $\Z_2$
\item all elements $v(w)$ expressed in the basis of $(n+1)$-simplices
\end{itemize}
\item Order the columns of $M$ by filtration.
\end{itemize}
Computing the persistence of the secondary obstruction then reduces to solving the EARLIEST SOLUTION problem for $M\xx = \aa$, 
this time over the $\Z_2$-field.

The hardest part is to compute the cohomology generators: this algorithm is summarized as follows.

\framedpar{
{\bf Problem {\sc Persistent Generators:}}
\\
Input: A filtration $X\supseteq A_0\supseteq\ldots\supseteq A_h=\emptyset$.
\\
Output: A sequence $g_1,\ldots,g_\nu\in Z^{n-1}(X;\Z)$ and a sequence of integers $\mu(0)\le\ldots\le\mu(h)$ such that $[g_1],\ldots, [g_{\mu(i)}]$ generate $H^{n-1}(X,A_i;\Z)$ for each $i$.
%\\
%Objective: Minimize the index of the \emph{lowest nonzero entry}
%of $\xx$, that is, $\ell\ge 0$ such that $x_\ell\neq 0$ and $x_{\ell+1}=
%x_{\ell+2} =\ldots = x_q = 0$.
}
We give imeplementation details of this part on a lower-level in Section~\ref{s:persistent-gen}.
Our algorithm will give the output with the number of generators $\nu$ minimal in a certain sense. Notably, when working over $\Q$ or $\Z_p$ instead of integers, the output would correspond to a persistence bar code with all the death information erased but including representative (co)cycles for each bar.
 
\heading{The special case $f: [0,1]^4\to\R^3$.}
\label{h:mn=43}
This section justifies the claim that for $X=[0,1]^4$ and $n=3$,
the general algorithm for computing persistence of secondary obstruction works, if we completely ignore the persistence generators $w$
and perform all computations over $\Z$-coefficients.

Assume first that $n=3$ and $X$ is arbitrary. The two main differences compared to the algorithm above are:
\begin{itemize}
\item The Steenrod operation and final coboundary matrix have to be computed over $\Z$, not $\Z_2$. This goes back to the fact that
the homotopy group $\pi_3(S^2)\simeq \Z$, unlike $\pi_n(S^{n-1})\simeq \Z_2$ for $n>3$. The homotopy group serves as cohomology 
coefficients in the theory of obstructions.
\item The operation $\smile_{n-3}$ reduces to the cup product $\smile$ and the operation $w\mapsto w\smile w$ is not 
linear on the level of cohomology but quadratic.
\end{itemize}
This means that while for any particular extension $x\in \Omega(A)$ of $\bar y$, we may test satisfiability of $\delta c=v(x)$,
we cannot test the \emph{existence} of such $x$ using a linear system of equations.
However, if $X=[0,1]^4$, we claim that for any two extensions $x,y\in\Omega(A)$ of $\bar y$, $v(x)-v(y)$ is a relative coboundary
and thus we need to check the equation $\delta c=v(x)$ only for one $x$.
\begin{lemma}
\label{l:trivial-cup}
Let $X$ be a triangulation of $[0,1]^4$, $A\subseteq X$, $x\in Z^{2}(X)$ and $w\in Z^2(X,A)$. Then 
$$
(x-w)\smile (x-w) \,-\, (x\smile x) \in B^4(X,A).
$$
\end{lemma}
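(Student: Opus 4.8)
The plan is to expand the left-hand side using bilinearity of the cup product at the cochain level and then show that every term other than $x\smile x$ lands in $B^4(X,A)$. Writing $(x-w)\smile(x-w) = x\smile x - x\smile w - w\smile x + w\smile w$, I need to argue that $x\smile w$, $w\smile x$ and $w\smile w$ are all relative coboundaries in $B^4(X,A)$. The key structural fact is that $X$ is a triangulation of $[0,1]^4$, hence contractible, so $H^2(X;\Z)=0$ and thus $x=\delta b$ for some $b\in C^1(X;\Z)$. Substituting this, each of the three unwanted terms becomes either $\delta b\smile w$, $w\smile\delta b$, or is $w\smile w$; the first two are coboundaries of $b\smile w$ and $w\smile b$ respectively (up to sign), using the Leibniz rule $\delta(a\smile c) = \delta a\smile c \pm a\smile\delta c$ together with $\delta w = 0$. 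Moreover since $b\smile w$ and $w\smile b$ vanish on $A$ (because $w$ does, and the cup product of a cochain vanishing on $A$ with anything vanishes on simplices of $A$), these are \emph{relative} coboundaries in $B^4(X,A)$.

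The remaining term $w\smile w$ requires a separate argument, and this is the main obstacle. Here I would pass to the relative cohomology class: $w\in Z^2(X,A)$ represents a class $[w]\in H^2(X,A;\Z)$, and $[w\smile w]$ is the image of $[w]\otimes[w]$ under the cup product $H^2(X,A;\Z)\otimes H^2(X,A;\Z)\to H^4(X,A;\Z)$. I would invoke the claim (attributed in the acknowledgements to Eric Wofsey, ref.~\cite{Wofsey-stack}) that for contractible $X$ this relative cup product map is identically zero. The cleanest route to this: the long exact sequence of the pair, combined with $\tilde H^*(X)=0$, gives isomorphisms $H^k(X,A;\Z)\cong \tilde H^{k-1}(A;\Z)$, and under these identifications the relative cup product $H^2(X,A)\otimes H^2(X,A)\to H^4(X,A)$ factors through a map that can be computed to vanish --- intuitively because a product of two classes, each ``coming from the boundary $A$,'' has nowhere to live once $X$ itself carries no cohomology. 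One should be slightly careful that the relevant products land in $H^4(X,A)\cong \tilde H^3(A)$ where $A$ may be up to $3$-dimensional, so the target need not vanish for dimension reasons alone; the vanishing is genuinely a consequence of contractibility of $X$, not of $A$.

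Concretely I would organize the write-up as: (i) expand and reduce to showing $x\smile w, w\smile x, w\smile w\in B^4(X,A)$; (ii) use $H^2(X;\Z)=0$ to write $x=\delta b$ and dispatch $x\smile w$ and $w\smile x$ via the Leibniz rule, checking the relative condition on $A$; (iii) dispatch $w\smile w$ by showing the relative cup square $H^2(X,A;\Z)\to H^4(X,A;\Z)$ is zero when $X$ is contractible, citing \cite{Wofsey-stack} and sketching the long-exact-sequence argument. I expect step (iii) to be the delicate one, both because it is the genuinely topological input and because one must be careful that the cochain-level statement ($w\smile w\in B^4(X,A)$, not merely $[w\smile w]=0$ after further choices) follows --- but that is automatic: $[w\smile w]=0$ in $H^4(X,A;\Z)$ means exactly $w\smile w\in B^4(X,A)$, since $w\smile w$ is already a relative cocycle. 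A minor point worth a sentence: the cup product on cochains is associative and bilinear but not commutative, which is why both mixed terms $x\smile w$ and $w\smile x$ must be treated, though by the above they are handled identically.
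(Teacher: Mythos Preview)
Your proposal is correct and follows the same decomposition as the paper: expand bilinearly, dispatch the mixed terms using $H^2(X)=0$, and then show that the relative cup square $H^2(X,A)\otimes H^2(X,A)\to H^4(X,A)$ vanishes. Your treatment of the mixed terms via $x=\delta b$ and the Leibniz rule is a slightly more hands-on variant of the paper's one-line appeal to the product $H^2(X)\times H^2(X,A)\to H^4(X,A)$; both are fine, and your check that $b\smile w$ and $w\smile b$ lie in $C^3(X,A)$ is correct.

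The only place your write-up is thin is step~(iii). Saying that under the isomorphism $H^k(X,A)\cong \tilde H^{k-1}(A)$ the cup product ``factors through a map that can be computed to vanish'' is not yet an argument: the connecting homomorphism does not interact with cup products in an obvious way, and you would still need to produce that computation. The paper supplies exactly this missing piece. Since $X$ is contractible, the inclusion $A\hookrightarrow X$ extends over the cone to a map of pairs $(CA,A)\to(X,A)$; comparing the long exact sequences of the two pairs and applying the five-lemma gives a ring isomorphism $H^*(X,A)\cong H^*(CA,A)\cong \tilde H^*(\Sigma A)$. The vanishing of $[w]\smile[w]$ then follows from the standard fact that all cup products on a suspension are trivial, together with naturality. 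If you replace your vague sentence in~(iii) by this cone/suspension argument, your proof is complete and essentially identical to the paper's.
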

Using the parametrization $\Omega(A)=\{x-w:\,\,w\in Z^{n-1}(X,A;\Z)\}$ we immediately obtain that our general algorithm
for the persistence of secondary obstruction works once we replace the $\Z_2$-coefficients by $\Z$-coefficients in its final step. 
Moreover, we may completely ignore the persistent generators $w$ and don't need to compute them at all.\footnote{Note that
$x\smile x$ may represent a nontrivial element of $H^4(X,A)$, as $x$ is not an element of $Z^2(X,A)$ in general. The fact that
$x\smile x$ is zero on $A$ follows from the fact that $f$ is order-preserving.}
\begin{proof}
By bilinearity of the cup product, $(x-w)\smile (x-w) \, - \, x\smile x\,= \, - \, x\smile w \, -\, w\smile x\, +\, w\smile w$.
The mixed-term $x\smile w$ is a relative coboundary, because $\smile$ induces a bilinear product on the level of cohomology
$H^2(X)\times H^2(X,A)\to H^4(X,A)$ and $H^2(X)=0$, as $X$ is contractible. It remains to show that $w\smile w$ is a~relative coboundary.
Let $CA$ be the cone over $A$ and $A\hookrightarrow CA$ be an inclusion.
The inclusion $A\hookrightarrow X$ can be extended to a map $CA\to X$, because $X$ is contractible. The map of pairs
$(CA,A)\to (X,A)$ induces the commutative diagram in which the rows are the long exact sequences of cohomology groups.
$$
\begin{array}{ccccccccc}
H^{*-1}(X) & \to & H^{*-1}(A) & \to & H^*(X,A) &\to & H^*(X) &\to & H^*(A) \\
\downarrow & &\downarrow & &\downarrow& &\downarrow& & \downarrow\\
H^{*-1}(CA) & \to & H^{*-1}(A) & \to & H^*(CA,A) &\to & H^*(CA) &\to & H^*(A) 
\end{array}
$$
The vertical arrows $H^{*(-1)}(X)\to H^{*(-1)}(CA)$ are trivial as both spaces are contractible and 
$H^{*(-1)}(A)\to H^{*(-1)}(A)$ are identities.
By the five-lemma~\cite[p. 129]{Hatcher}, the middle homomorphism $H^*(X,A)\to H^*(CA,A)$ is an isomorphism. 
Further, $H^j(CA,A)\simeq H^j(CA/A)$ for $j>0$. The space $CA/A=:\Sigma A$ is the \emph{suspension} of $A$.
The cup product $H^2(\Sigma A)\times H^2(\Sigma A)\to H^{4}(\Sigma A)$ 
of a~suspension is trivial~\cite[Corollary 4.11]{Bredon-at} and the naturality of cup product implies that the cup product 
$H^2(X,A)\times H^2(X,A)\to H^{4}(X,A)$ is trivial for $j>0$ as well. This shows that $w\smile w\in B^4(X,A)$. 
\end{proof}

\section{Persistent integral homology computations}\label{s:pers}
\subsection{Algorithm for the \textsc{Earliest Solution} problem}
\label{s:earliest}

The {\tt earliest\_solution} algorithm is used to find the persistence of a (co)cycle. 
This is closely related to computing persistent homology, which is a well-studied problem, at least for coefficient in a finite field. 
We adapt the boundary matrix reduction algorithm by 
Edelsbrunner, Letscher and Zomorodian~\cite{edelsbrunner2002}, originally developed for persistent homology. Note that this algorithm, unlike classical 
Gaussian elimination or Smith normal form algorithms, is incremental, which is required by our application. 
Moreover, it only uses column operations, making efficient implementation for sparse matrices relatively easy.

\heading{Efficiency over finite fields.}
Recent work on computing persistent homology over finite fields resulted in significant performance improvements, see~\cite{phat,gudhi}. Despite the cubic worst-case bound, linear scaling is achieved on practical datasets, involving sparse matrices of size $10^9 \times 10^9$ and more. This encouraged us to adapt the modern version of the 
classical persistence algorithm to solve our problem over the integers, rather than adapting 
classical Smith or Hermite normal form algorithms to the persistent setting.

\heading{Reduced form and reduction.}
We adapt the notation common in computational topology literature: the \emph{lowest nonzero} of a nonzero column is defined as the lowest position (largest index) with nonzero coefficient. A sub-matrix is called \emph{reduced} if all the lowest nonzeros are unique. In other words, given a lowest nonzero, there may be other nonzero entries in the same row, but they must not be \emph{lowest} nonzeros. When this invariant is not satisfied, we say there is a collision. By \emph{lowest value}, we refer to the value of the lowest nonzero entry.

The algorithm starts from an empty matrix and adds one column at a time, maintaining the reduced prefix of the matrix, $R$. The rightmost column of each prefix is called the current column. 

Procedure {\tt reduce\_column} reduces the column {\tt curr} with respect to the reduced prefix {\tt R}.

\begin{lstlisting}
def reduce_column(current, R, force_divisibility=False):        
        while curr collides with some column in R:
                coll = column in R colliding with curr
                P = lowest_value(coll)
                Q = lowest_value(curr)             
                if force_divisibility is True and P does not divide Q:
                        return curr
                use Euclid to find nonzero a,b,c,d s.t. 
                    gcd(a,b) == 1 and gcd(c,d) == 1 and
                    a*P + b*Q == gcd(P,Q) and
                    c*P + d*Q == 0                          
                coll, curr = a * coll + b * curr, c * coll + d * curr                
        return curr
\end{lstlisting} 

Procedure {\tt earliest\_solution} solves the stated problem.
\begin{lstlisting} 
def earliest_solution(M, a):
        augment M with identity matrix (above)
        augment a with zeros (above)
        R = []
        for col in M:
                reduced = reduce_column(col, R, force_divisibility=False)
                append reduced to R            
                reduce_column(a, R, force_divisibility=True)            
                if a is zero:
                        return change of basis of a 
        return None
\end{lstlisting} 

\heading{Correctness.}
The basic operation is the addition of two different columns, without affecting the column span of the relevant matrix prefix. Therefore the solution is unaffected.

To retrieve the solution, at each step we attempt to reduce the input column vector {\tt a} with respect to the currently reduced prefix {\tt R}. The solution exists iff {\tt a} becomes zero, and is encoded by (the negation of) the change of basis column of {\tt a}. 
Since we solve the equation $Ax = a$ (and not $Ax = ka$) we perform additional divisibility check:
see {\tt force\_divisibility} in the {\tt reduce\_column} procedure.

\heading{From finite fields to integers.}
In the integral case, we may modify both the current column and the colliding column. This is in contrast with the finite field case, in which 
only the current column is modified. To determine the required linear combination of columns, we use the extended Euclid algorithm. 
%This is clearly more efficient than adding the entire columns multiple times to solve a single collision. 
As a result, the lowest \emph{value} of a certain column might change (decrease) many times during the reduction of other columns, but the \emph{position} is fixed once the column is reduced. Because of this, while reducing column {\tt a}, we need to take into account 
previously reduced columns, and not only the current one. Moreover, after a colliding column
is affected, it may not be in the column span of the prefix of the original matrix ending at this column. At this stage, however, this column may only affect columns that succeed the currently reduced column. Therefore, correctnes of the algorithm is unaffected.

\heading{Efficiency.} For efficiency reasons we use one technique suggested in~\cite{phat}: The current column is stored in a data-structure handling fast column additions and maximum element queries. One natural choice is a balanced binary search tree; 
more efficient alternatives are available. This way we avoid the following common bad case: Let $n$ be the total number of columns in $M$. When an current column becomes dense (the number nonzero entries is $\Theta(n)$), adding $\Theta(n)$ sparse columns takes time quadratic in $n$. Avoiding this situation does \emph{not} imply that we can efficiently handle matrices that become dense due to fill-in. However in practice, often a small number of columns display this behavior.

We also perform the computations in an on-line fashion -- we read columns one by one and stop once a solution is found. This gives significant practical improvements, because often the necessary matrix prefix is very small compared to the matrix of the entire complex.

Overall the algorithm performs well, exhibiting roughly linear scaling in the number of nonzero entries of the original matrix. In particular, we didn't observe coefficient blowup. 
Note that the lowest nonzero position in the current column decreases after resolving each collision, so the number of collision resolutions is quadratic in $n$. Each such resolution requires combining two columns, potentially of size $O(n)$, due to fill-in. This leads to cubic worst case running time bound, assuming that the magnitude of coefficients can be bounded by a constant. Of course, there exist cases when the blowup does occur, and the above analysis does not apply. More advanced algorithms can be used to alleviate the effect of the blowup, possibly at the cost of simplicity and efficiency in the situations that we encountered thus far. 

\subsection{Algorithm for the \textsc{Persistent Generators}
problem}
\label{s:persistent-gen}
After possibly refining the sequence $(A_i)_i$ we may assume that for
each filtration value $i$ there is exactly one $(n-1)$-simplex
$\Delta_i^{n-1}$ of that filtration value, that is $\Delta_i^{n-1}
\in A_{i-1}\setminus A_{i}$ (and possibly several $(n-2)$-simplices
in $A_{i-1}\setminus A_{i}$). This makes the algorithm and its analysis simpler.

The algorithm for \textsc{Persistent Generators} problem follows.
By the statement ``reduce the column'' we refer to the procedure
\texttt{reduce\_column} from Section~\ref{s:earliest}
\begin{itemize}
\item Let $M$ be the coboundary matrix for $\delta^{n-1}\:C^{n-1}(X;\Z)\to
C^n(X;\Z)$, that is, it consists of columns $\mm$ for integral
combination for $\delta \Delta^{n-1}$ for each $\Delta^{n-1}\in
X$ of growing filtration values. In exactly the same way we create
the coboundary matrix $N$ for $\delta^{n-2}$. During the reduction
of the matrix $M$ we keep track of the change of basis. (On low
level, we augment the matrix $M$ and then the change of basis
vector for a given column is encoded in the augmented part of
that column.)

\item We initialize $G:=()$ to be an empty sequence of $(n-1)$-cocycles
on $X$ and $\mu(-1):=0$.
\item For each filtration value $i=0,\ldots,h$ do the following:
\begin{itemize}
\item Reduce all columns of $N$ of the filtration value $i$.
\item Reduce the column $\mm$ of $M$ of filtration value $i$
(i.e., corresponding to the cochain $\delta \Delta^{n-1}_i$).
Let $\gg$ be the change of basis vector after the reduction.
\item If the reduced column $\mm$ equals to $0$ and there is
no reduced column $\nn$ in $N$ such that $\lowest(\nn)=i$ and
$\nn_{\lowest(\nn)}|\gg_{\lowest(\gg)}$ then add $\gg$ to $G$
and set $\mu(i):=\mu(i-1)+1$.
\item Otherwise set $\mu(i):=\mu(i-1)$
\end{itemize}
\item Output $G$ and $\mu(1),\ldots,\mu(h)$.
\end{itemize}

\begin{theorem}
The above algorithm solves the \textsc{Persistent Cycles} problem. Namely, after its $i$th iteration, the cohomology group $H^{n-1}(X,A_i)$ is generated by $[g_1],\ldots,[g_{\mu(i)}]$ where cycles $g_1,\ldots,g_{\mu(i)}\in Z^{n-1}(X,A_i;\Z)$ correspond to the columns
$\gg$ added to $G$ during the iterations $0,\ldots,i$.
\end{theorem}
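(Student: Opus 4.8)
The plan is to argue by induction on the filtration value $i$, tracking the invariant that after the $i$-th iteration the classes $[g_1],\ldots,[g_{\mu(i)}]$ generate $H^{n-1}(X,A_i;\Z)$, where the $g_j$ are the change-of-basis cocycles accumulated so far. The base case $i=-1$ (or $i=0$ before any simplex is added to $A_{-1}=X$, depending on indexing convention) is the statement that $H^{n-1}(X,X;\Z)=0$ is generated by the empty set, which is trivial since $\mu(-1)=0$. For the inductive step one passes from $A_{i-1}$ to $A_i$ by removing a single $(n-1)$-simplex $\Delta_i^{n-1}$ (and possibly several $(n-2)$-simplices), exploiting the refinement assumption made at the start of Section~\ref{s:persistent-gen}.

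First I would set up the algebra of the transition. The chain-level change when passing from the pair $(X,A_{i-1})$ to $(X,A_i)$ is governed by the long exact sequence of the triple $(X,A_{i-1},A_i)$, or equivalently by how the relative cochain groups grow: $C^{n-1}(X,A_i;\Z)$ gains exactly the generator dual to $\Delta_i^{n-1}$ over $C^{n-1}(X,A_{i-1};\Z)$, and $C^{n-2}(X,A_i;\Z)$ gains the generators dual to the new $(n-2)$-simplices. The key point is to read off, from the reduction state of the matrices $M$ (coboundary $\delta^{n-1}$) and $N$ (coboundary $\delta^{n-2}$), whether this step (a) creates a new cohomology class in degree $n-1$, (b) kills an existing one, or (c) does nothing to $H^{n-1}$. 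The reduced column $\mm$ of $M$ corresponding to $\delta\Delta_i^{n-1}$ being zero means $\delta^{n-1}$ annihilates (an integral combination yielding) the cochain $\Delta_i^{n-1}$ modulo previously reduced columns — i.e.\ the change-of-basis vector $\gg$ is a relative cocycle in $Z^{n-1}(X,A_i;\Z)$ not already a coboundary coming from $C^{n-2}(X,A_{i-1})$. The extra test against the reduced columns $\nn$ of $N$ with $\lowest(\nn)=i$ and the divisibility condition $\nn_{\lowest(\nn)}\mid \gg_{\lowest(\gg)}$ is precisely checking whether $\gg$ becomes a relative coboundary in $C^{n-1}(X,A_i;\Z)$ once the new $(n-2)$-simplices are allowed as a source; divisibility (rather than mere collision) is needed because we work over $\Z$ and a coboundary relation $\delta^{n-2}\eta = k\gg$ does not make $\gg$ a coboundary unless $k=\pm 1$, which is exactly what matching lowest values up to a unit encodes. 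So: when $\mm$ reduces to $0$ and no such $\nn$ witnesses $\gg$ as a coboundary, $\gg$ represents a genuinely new generator, and we append it and increment $\mu$; otherwise $H^{n-1}$ is generated by the old list, so $\mu$ is unchanged.

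The second block of work is to show the old generators survive and together with the possibly-new one they exhaust $H^{n-1}(X,A_i;\Z)$. Surviving: each previously recorded $g_j\in Z^{n-1}(X,A_{i-1};\Z)$ is automatically in $Z^{n-1}(X,A_i;\Z)$ since $C^{\bullet}(X,A_i)\supseteq C^{\bullet}(X,A_{i-1})$ and the coboundary is compatible; one must check that $[g_j]$ does not become zero in the larger quotient, which follows because the only new coboundaries in degree $n-1$ come from the new $(n-2)$-simplices, and the reduction of $N$ has already been used to detect exactly that. Exhausting: any class in $H^{n-1}(X,A_i;\Z)$ is represented by a relative cocycle $w$; decompose $w$ along the new generator dual to $\Delta_i^{n-1}$ — the component of $w$ "in the old part" is an old relative cocycle, handled by induction, and the residual multiple of the new generator is controlled by whether $\delta^{n-1}$ kills it, which is the content of the reduction of $\mm$. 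Making this "decompose and match with the reduction state" precise is, I expect, the main obstacle: one has to be careful that the matrix reduction, which only performs column operations and never touches the already-fixed lowest positions, genuinely computes the same invariants as the long exact sequence of the pair — in particular that the change-of-basis vector $\gg$ is a well-defined relative cocycle and that the divisibility test correctly distinguishes "torsion relation" from "the class dies." Once the dictionary between \texttt{reduce\_column} output and the relative (co)cycle/(co)boundary groups is nailed down, the inductive step is a finite case check, and minimality of $\nu$ follows from the fact that a new generator is recorded only when forced, i.e.\ only at a filtration value where $\mathrm{rank}\,H^{n-1}(X,A_i)$ (plus torsion bookkeeping) strictly increases relative to what the current list can generate.
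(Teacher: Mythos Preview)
Your inductive skeleton matches the paper's, but you have misread what the divisibility test against the reduced columns of $N$ is actually detecting, and this is a genuine gap.

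You claim the test ``is precisely checking whether $\gg$ becomes a relative coboundary in $C^{n-1}(X,A_i;\Z)$ once the new $(n-2)$-simplices are allowed as a source,'' i.e.\ whether $[g]=0$ in $H^{n-1}(X,A_i)$. That is not what the test does. The column $\gg$ has its lowest nonzero entry at row~$i$, and the test only asks whether that single entry can be cancelled by a multiple of the (unique) reduced column $\nn$ of $N$ with $\lowest(\nn)=i$. Cancelling row~$i$ does \emph{not} make $g$ a coboundary; it produces a cocycle $g' = g - (\text{coboundary})$ all of whose nonzero components lie in rows $<i$, i.e.\ $g'\in Z^{n-1}(X,A_{i-1};\Z)$. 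So the test is exactly: is $[g]$ in the image of the natural map $H^{n-1}(X,A_{i-1})\to H^{n-1}(X,A_i)$? This is the paper's reading, and it is what makes the induction close: by hypothesis that image is spanned by $[g_1],\ldots,[g_{\mu(i-1)}]$, so $[g]$ is redundant precisely when the divisibility holds. Your reading would add $g$ whenever $[g]\neq 0$, which is not what the algorithm does and would also force you to argue independence rather than mere generation.

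Relatedly, your ``surviving'' paragraph is work you do not need and cannot do. The theorem only asserts that $[g_1],\ldots,[g_{\mu(i)}]$ \emph{generate} $H^{n-1}(X,A_i)$; nothing prevents some $[g_j]$ from becoming zero at a later step, and your sketch gives no mechanism for detecting that anyway (the reduction of $N$ at step $i$ says nothing about the earlier $g_j$). The clean argument is the paper's: when $\mm$ reduces to zero one has $Z^{n-1}(X,A_i)=Z^{n-1}(X,A_{i-1})\oplus\langle g\rangle$, hence $H^{n-1}(X,A_i)$ is generated by the image of $H^{n-1}(X,A_{i-1})$ together with $[g]$; the divisibility test decides whether $[g]$ already lies in that image. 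No survival or exhaustion case analysis is required.
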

\begin{proof}
We proceed by induction on the value $i$. For $i=-1$ the claim trivially holds.

Let us assume that $i\ge 0$ and that the claim holds for $i-1$. First observe that once the new column $\mm$ is not reduced to a zero column, then $Z^{n-1}(X,A_i;\Z)=Z^{n-1}(X,A_{i-1};\Z)$. Otherwise $Z^{n-1}(X,A_i;\Z)=Z^{n-1}(X,A_{i-1};\Z)\oplus \langle g\rangle$ where the cocycle $g\in Z^{n-1}(X,A_i;\Z)$ corresponds to the change of basis vector $\gg$ in the $i$th iteration. We only have to check whether $[g]$ is a linear combination of  $[g_1], \ldots, [g_{\mu(i-1)}]$. By the induction hypothesis, it happens if and only if $g$ is cohomologous to a cocycle in $Z^{n-1}(X,A_{i-1};\Z)$. And this in turn is true if and only if we can reduce to zero the lowest nonzero component (that is, the $i$th component) of the vector $\gg$ by adding a combination of columns from $N$ of filtration value at most $i$ (these columns generate the group of coboundaries $B^{n-1}(X,A_{i};\Z)$). Since this is exactly the reduced part of the matrix $N$, it is enough to find if there is a column with the lowest nonzero index equal to $i$ and check the divisibility condition.  
\end{proof}

\section{Some details of our implementation: exploiting the cubical structure}
\label{s:implementation}
The domain $X$ we consider in the implementation is a cube $[0,1]^m$ triangulated as follows. 
We define $I_1,\ldots, I_m$ to be unit intervals subdivided into $n_i$ equidistant intervals of length $1/n_i$, $i=1,\ldots, m$.
This yields a cubical set structure $\prod_j I_j$ on the unit cube. Further, we subdivide each $m$-cube into $m!$ simplices
via the Freudenthal triangulation~\cite[p. 154]{allgower2003introduction}.  The resulting triangulation
naturally corresponds to the product $\prod I_j$, understood as a product $\prod_j I_j$ of \emph{simplicial sets}~\cite{Friedm08}.
We call the set of vertices a \emph{grid}. 
A function $f:X\to\R^n$ is then given by a set of $\R^n$-vectors in each vertex (a~multidimensional rasterized image), 
together with a~simplexwise Lipschitz constant $\alpha$.

Many operations in the algorithm outlined in the paper body---such as the computation of the pullback $y=f^\sharp(z)$, 
computing codifferentials of cochains and the $\smile_{n-3}$ operation---can be done \emph{locally}, 
without having to work with the full lists of simplices of a given dimension. 
The only step where global structure is needed is the {\textsc{Earliest Solution}} algorithm, 
where the matrix $M$ has rows indexed by the set of all $k$-simplices and columns indexed by $(k-1)$-simplices.
Indeed, the construction of this matrix and computation with it are the most memory- and time-consuming operations,
as the number of $k$-simplices increases rapidly with dimension.

Therefore, we implemented a modification of the algorithm described above based on the more feasible \emph{cubical} structure.
Instead of working with the simplicial filtration $A_r$, we use the filtration $A_r^\square\subseteq A_s^\square$ where 
$A_r^\square$ is defined to be the triangulation of the set of all cubes $c$ such that $|f(v)|\geq v$ for all vertices $v$ of $c$. 
These are still simplicial complexes and $f'$, $y$, the extension $\tilde{y}$ of $y$ and $\delta\tilde{y}$ are defined with no changes. For the computation
of persistence, however, we switch to the cubical setting via the Eilenberg-Zilber reduction~\cite{Eilenberg:1953}. 
We denote by $C^*(\Pi_j I_j)$ the simplicial cochains and by $\otimes_j C^*(I_j)$ the cubical cochains: there exist
chain homomorphisms of degree $0$
\begin{align*}
&AW: \otimes_j C^*(I_j)\to C^*(\Pi_j I_j)\\
&EML: C^*(\Pi_j I_j)\to \otimes_j C^*(I_j)
\end{align*}
such that $EML\circ AW$ is the identity and $AW\circ EML$ is chain homotopic to the identity. Both maps induce cohomology 
isomorphisms and can be relatively easily implemented using common formulas~\cite{Pedro:2005}. This allows us to switch between the simplicial
and cubical cochains anytime we need. Within the computation of persistence of the primary obstruction,
we compute the smallest $j$ so that there exists a cubical cochain $c_\square\in Z^{n-1}(X,A_{r_j}^\square)$ such that
$\delta c_\square=(\delta\tilde{y})_\square:=EML (\delta \tilde{y})$.
The matrix computation part \textsc{Earliest Solution} deals with the cubical coboundary
matrix, which is significantly smaller than the simplicial one.
For an illustration, the number of $k$ simplices in the triangulation of one $m$-cube is 50 already for $(m,k)=(4,2)$ and
more than 4 millions for $(m,k)=(8,6)$. 

For the secondary obstruction, we need to convert $c_\square$ back into a simplicial cochain and construct a simplicial $(n-1)$-cochain 
$x\in \Omega(A_{r_j})$ that extends $y$ on $A_{r_j}$ and is a global cocycle. 
We denote by SHI the cochain homotopy map $SHI: C^*(\Pi_j I_j)\to C^{*-1} (\Pi_j I_j)$, 
satisfying $AW\circ EML - \mathrm{id}=\delta\circ SHI + SHI\circ\delta$.
Then, using $\delta^2=0$, we obtain
\begin{align*}
\delta AW c_\square 
&= AW \delta c_\square = AW (\delta\tilde{y})_\square = AW\, EML\,\delta\tilde{y}= \\
& = \delta AW \,\,EML\, \tilde{y}=\delta \tilde{y} + \delta \,SHI \,\delta \tilde{y}.
\end{align*}
We compute $\tilde{c}:=AW (c_\square) - SHI(\delta \tilde{y})$ which is a \emph{simplicial} cochain zero on $A_{r_j}^\square$
and the last equation asserts that $\delta \tilde{c}=\delta\tilde{y}$ which allows us to compute the simplicial cocycle
$$
x:=\tilde{y} - \tilde{c} \in \Omega(r_j)
$$
having avoided to work with large lists of all simplices. 
The computation of $v(x):=(x\mod 2)\smile_{n-3} (x\mod 2)$ is done on the simplicial level and the property 
$x\in Z^{n+1}(X,A_{r_j}^\square ;\Z_2)$ depends on the fact that the chosen ordering of vertices of $X$ and $\Sigma^{n-1}$
is compatible with $f'$. Then we again apply the $EML$ operator to $v(x)$ and convert it to a cubical cochain $v(x)_\square$.
Persistent generators $w\in Z^{n-1}(X,A_{r}^\square;\Z)$ and their Steenrod images can be computed on the cubical level~\cite{Krcal2016}
and the cubical persistence of the secondary obstruction is done via a~cubical coboundary matrix, this time in dimension $n,(n+1)$.

There is a price to pay for the more convenient cubical filtration: it is courser than the simplicial one and
we can no more use the estimate on the robustness of zero set 
derived in Section~\ref{sec:algorithm-oracle}. On one hand, we have the relation $A_r^\square\subseteq A_r$.
This implies that whenever $f'|_{A_r}$ is homotopic to $f/|f|$, then so is $f'|_{A_r^\square}$, and
non-extendability of $f'|_{A_r^\square}$ to higher skeleta of $X$ implies the non-extendability of $f'|_{A_r}$.
Thus whenever the algorithm certifies non-extendability of $f|_{A_r^\square}$ and $r>\alpha n^{1/p}$, then 
$r-\alpha$ is a lower bound on the robustness of zero.
If we can additionally prove that $A_s\subseteq A_r^\square$, then the extendability of $f'|_{A_r^\square}$
implies the extendability of $f'|_{A_s}$.  The simplexwise $\alpha$-Lipschitz condition implies that any two vertices
$u,v$ of a cube $c$ satisfy $|f(u)-f(v)|\leq 2\alpha$, as $u,v$ can be connected by  path of at most two simplicial edges.
This implies $A_{r+2\alpha}\subseteq A_r^\square$ and extendability of $f'|_{A_r^{\square}}$ to all of $X$ implies the upper bound 
$r+3\alpha$ on the robustness.

Based on the experience with various functions in low dimensions, the practical performance has not one significant bottleneck.
The most resources-consuming steps in the primary obstruction computation include the computation of the cubical filtration and
the EARLIER SOLUTION subroutine, especially if the matrix has millions of columns.

\section{Experiments with formulas}
\label{s:formulas}
Our prototype is implementation in Python using numpy. While the efficiency is severely limited by this choice, 
we were able to run examples for $\dim X\leq 8$.
%Even in these cases, the estimates derived from Lemma~\ref{t:approx} are often so limiting that the smallest $r$
%satisfying them is already larger than the robustness. 

Although the strength of our methods is primarily in cases where we have uncertainty on $f$, 
the following examples provide some observations about the performance in cases where the
function values at the vertices are computed using exact formulas. 

In what follows, we assume a subdivision of the interval $[-1,1]$ into a set $I_g$ of $g$ equidistant points
and consider a grid $I_g^m\subseteq [-1,1]^m$ of size $g^m$. We chose the max-norm $\ell_\infty$ on $\R^n$ which yields
the smallest value of $\alpha n^{1/p}$ and allows smaller ``initial $r_0$''. This is desired, as for small grids it often happens
that $\alpha$ is large and $\alpha n^{1/p}$ may be larger then the robustness of zero in which case we fail to detect anything.

\heading{Testing primary obstruction on a quadratic function.}
First we used the function $f: [-1,1]^n\to\R^n$ given by 
\begin{align*}
& f_1(x)=x_1^2-x_2^2-\ldots -x_n^2 \\
& f_2(x)=2x_1 x_2\\
& \ldots \\
& f_n(x)=2x_1 x_n.
\end{align*}
This function has a single zero in the origin and has been used as a benchmark example in~\cite{nondec}. 
If $n$ is even, then the zero is robust and its robustness equals 
$\min_{x\in\partial [-1,1]^m} |f(x)|$: the zero has index two in this case. If $n$ is odd, then the zero has index zero 
and can be removed by arbitrary small perturbations. 
Using the max-norm, we use calculus to derive the estimate 
$|f(x)-f(y)|_{\infty}\leq 2n\,|x-y|_\infty$. Whenever $x,y$ are contained in one simplex of the triangulation, then the max-norm satisfies
$|x-y|_\infty\leq 2/(g-1)$ and we can use the piecewise Lipschitz constant $\alpha:=4n/(g-1)$. 

For our experiments, we now assume that $f$ is only given via a~list of function values in the grid $g^n$ and the 
simplexwise Lipschitz constant $\alpha=4n/(g-1)$. In this case, we have $\dim X=n$ so only primary obstruction can be nontrivial.
Using Lemma~\ref{t:approx}
for $p=\infty$ yields that whenever $r>\alpha$, then $f'$ is simplicial on $A_r$ and homotopic to $f/|f|$.
The relation $A_r^\square\subseteq A_r$ then implies that $f'$ is simplicial and homotopic to $f/|f|$ on $A_r^\square$ as well.
Thus we can run our algorithm with an initial $r_0=\alpha=4n/(g-1)$, compute the persistence of primary obstruction $r_1\geq r_0$
and---using the estimates at the end of Section~\ref{s:implementation}---conclude that whenever $r_1>r_0$ 
then the robustness of zero is between $r_1-\alpha$ and $r_1+3\alpha$.

The following table illustrates some properties of the computations: $n$ refers to the dimension, $g$ is the 
number of points subdividing the intervals in each dimension, the initial $r_0=\alpha$ is as above, 
$r_1$ is the persistence of the primary obstruction, 
$\delta^{n-1}$ the number of columns of the cubical coboundary matrix $C^{n-1}_\square\to C^n_\square$, 
the ``computed robustness'' displays lower and upper bounds $r_1-\alpha$, $r_1+3\alpha$ on the robustness of zero 
and the ``true robustness'' column the approximation of the real robustness of zero of the function $f$ 
defined exactly by the above formula,\footnote{This is $0$ in odd dimensions and $\frac{2}{\sqrt{n}+1}$ in even dimension.}
all wrt. the $|\cdot |_\infty$ norm.
\begin{center}
\begin{tabular}{c|c|c|c|c|c|c}
$n$ & $g$ & $r_0:=\alpha$ & $r_1$ & \# columns of $\delta^{n-1}$ & computed robustness & true robustness\\
\hline
$2$ & $20$ & $0.421$ & $0.8643$ & $760$ & $[0.443,2.127]$ & $0.828$\\
$2$ & $100$ & $0.08$ & $0.8285$ & $19 \,\,800$ & $[0.748,1.07]$ & $0.828$\\
$2$ & $500$ & $0.016$ & $0.83$ & $499\,\,000$ & $[0.814,0.878]$ & $0.828$\\
$3$ & $20$ & $0.632$ & $=r_0$ & $21\,\,660$ & $\leq 2.379$ & $0$\\
$3$ & $50$ & $0.245$ & $=r_0$ & $360\,\,150$ & $\leq 0.971$ & $0$\\
$3$ & $100$ & $0.121$ & $=r_0$ & $2\,\,940\,\,300$ & $\leq 0.483$ & $0$\\
$4$ & $20$ & $0.842$ & $=r_0$ & $548\,\,720$ & $\leq 3.37$ & $0.667$\\
$4$ & $30$ & $0.552$ & $0.711$ & $2\,\, 926\,\, 680$ & $[0.159, 2.367]$ & $0.667$\\
$4$ & $40$ & $0.41$  & $0.667$ & $9\,\,491\,\,040$  &  $[0.257, 1.897]$ & $0.667$
\end{tabular}
\end{center}
The total running times of these 9 computations is displayed in Figure~\ref{fig:times}.
\begin{figure}
\begin{center}
\includegraphics[width=12cm]{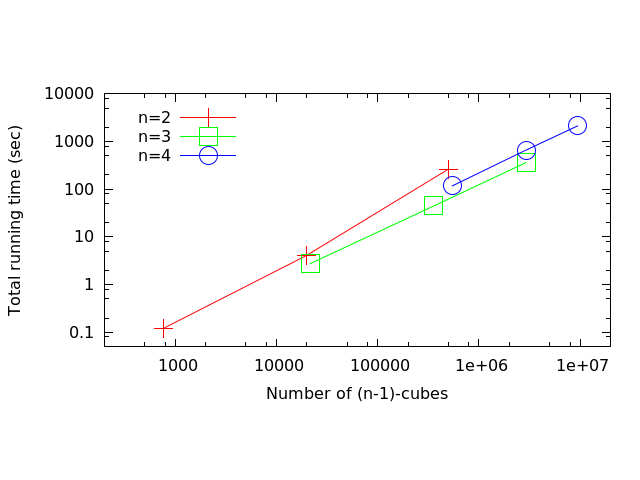}
\caption{Total running time of the computations, performed on Intel Xeon CPU X5680, 3.3 GHz,
as a function of the number of $(n-1)$-simplices (which equals the number of columns in the coboundary matrix).
}
\label{fig:times}
\end{center}
\end{figure}
%\vskip -2cm
The data suggest that the at least in these simple cases, the running time is approximately linear in the number of $(n-1)$-simplices.
The $n=3$ case is below the other two, because the primary obstruction is trivial there and the 
EARLIEST SOLUTION subroutine terminates almost immediately, using only very few columns in the matrix reduction.

In higher dimensions, the condition $r_0\geq \alpha:=4n(g-1)$ is too strict and we can only hardly start with $r_0$ that is smaller than the robustness.
However, we could take $r_0$ to be the minimal value for which $f'|_{A_{r_0}^\square}$ is simplicial:
in all cases that we have tried, we verified that $f$ was nowhere zero on $A_{r_0}^\square$ with $f'$ homotopic to $f/|f|$ 
on $A_{r_0}^\square$. 

Surprisingly, starting with minimal $r_0$ for which $f'|_{A_{r_0}^\square}$ is simplicial and computing 
the minimal $r_1$ for which $f'$ is extendable, yields in many cases a~much better estimate of the robustness of zero set than
the estimates based on Lemma~\ref{t:approx}: this can already be seen in the above table where the $r_1$ column is a good approximation
of the true robustness. 
We give the table with smaller grids that continues to higher dimensions.
\begin{center}
\begin{tabular}{c|c|c|c|c}
$n$ & $g$ & min simplicial $r_0$ & $r_1$ & true robustness\\
\hline 
$2$ & $10$ & $0.025$ & $0.889$ & $0.828$\\
$3$ & $10$ & $0.025$ & $=r_0$ & $0$\\
$4$ & $10$ & $0.025$ & $0.667$ & $0.667$\\
$5$ & $10$ & $0.074$ & $=r_0$ & $0$ \\
$6$ & $10$ & $0.074$ & $0.667$ & $0.58$ \\
$7$ & $6$ & $0.241$ & $=r_0$ & $0$ \\
$8$ & $5$ & $0.251$ & $1.0$ & $0.522$
\end{tabular}
\end{center}

It is an interesting question to find natural conditions on functions, other than the relatively weak simplexwise Lipschitz property,
that justify the usage of smaller grids and guarantee that the robustness of zero set is close to the computed minimal $r$ 
for which $f'|_{A_{r}^\square}$ becomes extendable. 

\heading{A function with nontrivial secondary obstruction.}
A second function we experimented with is $h: [-1,1]^{n+1}\to\R^n$ given by
\begin{align*}
& h_1(x)= 2x_0 x_2 + 2x_1 x_3\\
& h_2(x)= 2x_1 x_2 - 2x_0 x_3\\
& h_3(x)= x_0^2+x_1^2-x_2^2-x_3^2\\
& h_4(x)=x_4 \\
& \hskip 1cm \ldots \\
& h_n(x)=x_n.
\end{align*}
The restriction of $h$ to $\partial [-1,1]^{n+1}$ is the generator of the nontrivial homotopy group $\pi_n(S^{n-1})$:
in case $n=3$ it is the Hopf map and for $n>3$ its iterated suspension.
The robustness of zero equals $\min_{x\in \partial [-1,1]^{n+1}} |h(x)|$ and\footnote{The robustness 
is $1$ in the $\ell_2$ norm and $\sqrt{3}-1$ in the max-norm.} 
it is the simplest example where a nontrivial secondary obstruction occurs. Common tests for zero verification such as
the degree test would fail here.

Again, we work with the max-norm $\ell_\infty$ and derive, via elementary calculus, the estimate $2(n+1)$ on the global Lipschitz constant.
This yields $|h(x)-h(y)|\leq \frac{4(n+1)}{g-1}=:\alpha$.
Assume that only $\alpha$ and a list of function values in a grid $g^{n+1}$ is given. If $g$ is large enough so that $r_0:=\alpha$
is smaller than $\min_{\partial} |h(x)|_{\infty}=\sqrt{3}-1$, then the algorithm computes the minimal $r_1>r_0$ for which
$h|_{A_r^\square}$ is extendable to a nowhere zero function $[-1,1]^{n+1}\to\R^{n}\setminus\{0\}$. For this to succeed, we need to take
$g$ at least 22 for $n=3$.

Again, the program gives surprisingly good results for much smaller grids, although Lemma~\ref{t:approx} gives no guarantees. 
In all cases that we tried, we verified that whenever $r_0$ is large enough so that $h'|_{A_{r}^\square}$ is simplicial, 
then it is homotopic to $h/|h|$ and the algorithm computes that the secondary obstruction dies close to the real robustness of zero.
\begin{center}
\begin{tabular}{c|c|c|c|c}
$n$ & $g$ & min simplicial $r_0$ & $r_1$ & true robustness\\
\hline 
$3$ & $10$ & $0.1$ & $0.79$ & $0.732$\\
$4$ & $10$ & $0.111$ & $0.79$ & $0.732$\\
$5$ & $8$ & $0.163$ & $0.816$ & $0.732$\\
\end{tabular}
\end{center}

\end{document}